  \providecommand\BibTeX{{%
    \normalfont B\kern-0.5em{\scshape i\kern-0.25em b}\kern-0.8em\TeX}}}
\newtheorem{theorem}{Proposition}
\newcommand{\positive}[1]{\colorbox{green!20}{$#1$}}
\newcommand{\negative}[1]{\colorbox{red!20}{$#1$}}
\newcommand{\checkValue}[1]{\IfSubStr{#1}{-}{\negative{#1}}{\positive{#1}}}
\begin{document}

\title[Learn to Rank Risky Investors]{Learn to Rank Risky Investors: A Case Study of Predicting Retail Traders’ Behaviour and Profitability}

\author{Weixian Waylon Li}
\authornotemark[1]
\email{waylon.li@ed.ac.uk}
\affiliation{%
  \institution{AIAI, School of Informatics, University of Edinburgh}
  \streetaddress{10 Crichton Street}
  \city{Edinburgh}
  \country{United Kingdom}
  \postcode{EH8 9AB}
}

\author{Tiejun Ma}
\authornotemark[1]
\email{tiejun.ma@ed.ac.uk}
\affiliation{%
  \institution{AIAI, School of Informatics, University of Edinburgh}
  \streetaddress{10 Crichton Street}
  \city{Edinburgh}
  \country{United Kingdom}
  \postcode{EH8 9AB}
}


\begin{abstract}
Identifying risky traders with high profits in financial markets is crucial for market makers, such as trading exchanges, to ensure effective risk management through real-time decisions on regulation compliance and hedging. However, capturing the complex and dynamic behaviours of individual traders poses significant challenges. Traditional classification and anomaly detection methods often establish a fixed risk boundary, failing to account for this complexity and dynamism. To tackle this issue, we propose a profit-aware risk ranker (PA-RiskRanker) that reframes the problem of identifying risky traders as a ranking task using Learning-to-Rank (LETOR) algorithms. Our approach features a Profit-Aware binary cross entropy (PA-BCE) loss function and a transformer-based ranker enhanced with a self-cross-trader attention pipeline. These components effectively integrate profit and loss (P\&L) considerations into the training process while capturing intra- and inter-trader relationships. Our research critically examines the limitations of existing deep learning-based LETOR algorithms in trading risk management, which often overlook the importance of P\&L in financial scenarios. By prioritising P\&L, our method improves risky trader identification, achieving an 8.4\% increase in F1 score compared to state-of-the-art (SOTA) ranking models like Rankformer. Additionally, it demonstrates a 10\%-17\% increase in average profit compared to all benchmark models.
  
\end{abstract}

\begin{CCSXML}
<ccs2012>
   <concept>
       <concept_id>10002951.10003317.10003338.10003339</concept_id>
       <concept_desc>Information systems~Rank aggregation</concept_desc>
       <concept_significance>500</concept_significance>
       </concept>
    <concept>
       <concept_id>10002951.10003317.10003338.10003343</concept_id>
       <concept_desc>Information systems~Learning to rank</concept_desc>
       <concept_significance>500</concept_significance>
       </concept>
    <concept>
       <concept_id>10010147.10010257.10010293.10010294</concept_id>
       <concept_desc>Computing methodologies~Neural networks</concept_desc>
       <concept_significance>500</concept_significance>
       </concept>

</ccs2012>
\end{CCSXML}

\ccsdesc[500]{Information systems~Rank aggregation}
\ccsdesc[500]{Information systems~Learning to rank}
\ccsdesc[500]{Computing methodologies~Neural networks}
\keywords{learning to rank,
domain-specific application,
individual behaviour modelling,
risk assessment}


\maketitle

\section{Introduction}
\label{sec:intro}

High-risk trading by individual traders, particularly in spread trading of Contracts for Difference (CFDs), significantly impacts financial markets through substantial trading volume and liquidity. 
In the United Kingdom, CFD trading accounts for approximately 10\% of the £1.2 trillion traded annually on the London Stock Exchange \cite{MA2022330}, contributing to liquidity and necessitating extensive hedging in main markets. 
This creates both opportunities and risks for market participants, making it critical for market makers to predict spread traders' behaviour to manage risks, maintain liquidity, and set competitive bid-ask spreads effectively \citep{dnn}.
However, predicting the behaviour and profitability of spread traders is challenging due to the complexity of the dynamic nature of markets, individual traders' heuristic behaviour and the diverse range of trading strategies used by traders~\cite{dnn}. 
In the realm of spread trading, traders aim to profit from the price differences in financial instruments. 
Their strategies are largely based on anticipated price movements, adding another layer of complexity to market behaviour \cite{brady_ramyar, MA2022330}.

Traditionally, identifying risky behaviours in financial transactions is viewed as either a prediction or anomaly detection task \cite{HILAL2022116429}. 
As demonstrated in Figure~\ref{fig:ranking-advantage}, both methods typically use absolute decision boundaries to make predictions.
Differing from the existing methods, we propose a shift in perspective, treating risky trader prediction as a ranking problem for individual traders. 
This is due to the fluid nature of risk thresholds, which can fluctuate over time. 
In this context, relative rankings is more meaningful than absolute thresholds, making ranking models particularly suitable. 
For example, according to the Bank of England’s Quarterly Bulletin\footnote{\url{https://www.bankofengland.co.uk/quarterly-bulletin/2021/2021-q3/no-economy-is-an-island-how-foreign-shocks-affect-uk-macrofinancial-stability}} for Q3 2021, the economic downturn triggered by the coronavirus pandemic had a severe impact on global asset markets, with major equity indices falling to nearly 30\% of their levels from early December 2019 by late March 2020. 
During this period,  even the high-risk traders might see reduced profits, leading standard classification algorithms to inaccurately categorise all traders as low-risk traders. 
Ranking algorithms, in contrast, can better identify the top risky traders in such periods.

\begin{figure}
    \centering
    \includegraphics[width=0.8\linewidth]{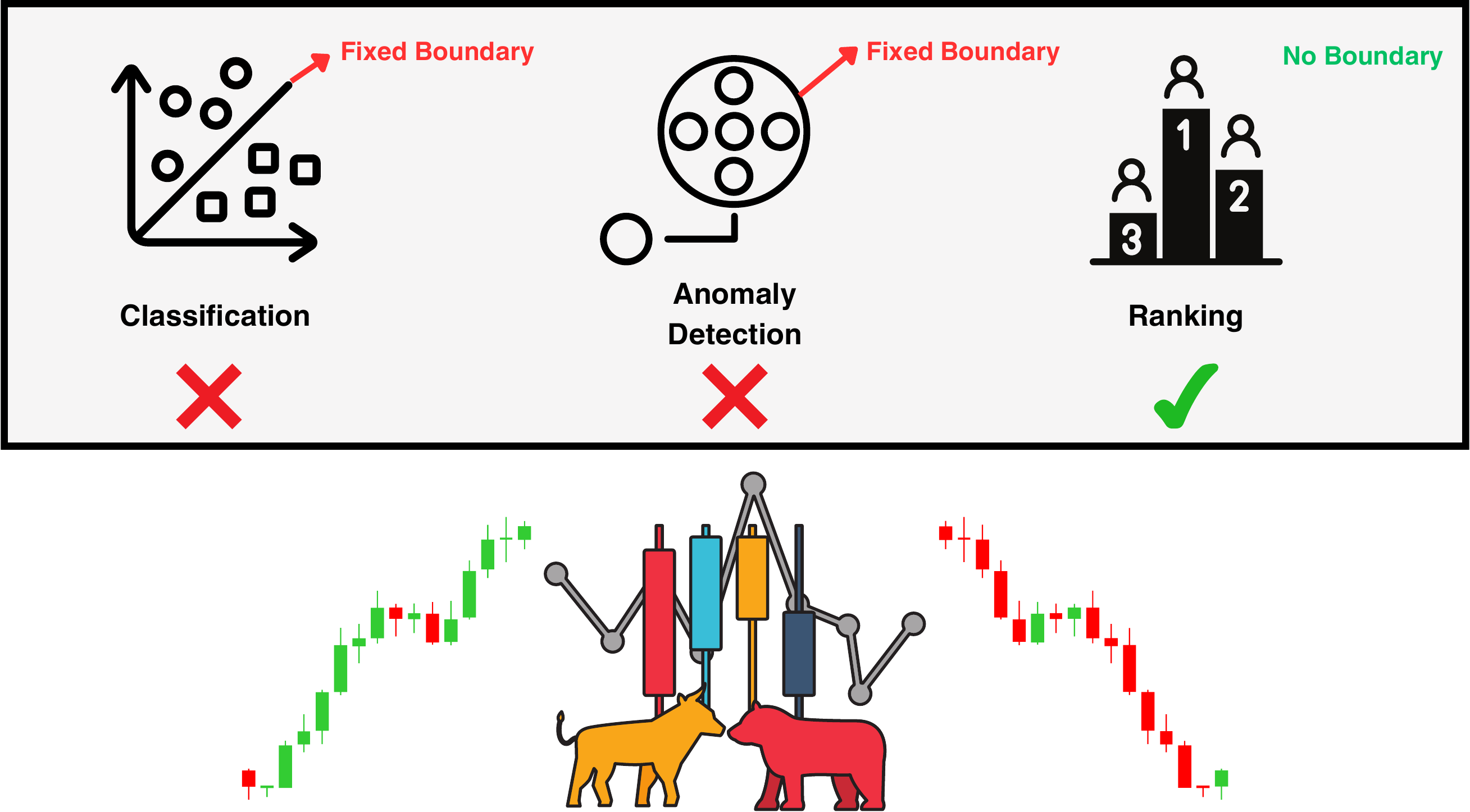}
    \caption{Comparison of classification, anomaly detection, and ranking approaches. Classification and anomaly detection rely on fixed decision boundaries, while ranking models use relative risk scores, better aligning with the dynamic market conditions.}
    \label{fig:ranking-advantage}
\end{figure}

However, adapting ranking algorithms to financial domains also presents unique challenges. 
Firstly, most existing ranking algorithms are designed for standard LEarning-TO-Rank (LETOR) tasks in web search \cite{10.1145/3292500.3330677,10.1145/3477495.3536334}, focusing on relevance rather than crucial financial metrics like risk adjusted return and profit and loss (P\&L). 
Additionally, previous research on LETOR models, tested on general information retrieval datasets like Istella LETOR\footnote{\begin{minipage}[t]{0.9\linewidth}\url{https://istella.ai/datasets/letor-dataset/}\end{minipage}}, 
MSLR30K\footnote{\begin{minipage}[t]{0.9\linewidth}\url{https://www.microsoft.com/en-us/research/project/mslr/?from=https://research.microsoft.com/en-us/projects/mslr/&type=exact}\end{minipage}}, and Yahoo! LETOR challenge \cite{pmlr-v14-chapelle11a}, do not adequately address the severe class imbalance found in risky trader detection.
Unlike typical information retrieval tasks, where each query usually has a number of relevant documents, financial datasets for risky trader detection are often highly imbalanced. 
This is characterised by a significant disproportion between the number of normal and anomalous instances, with some ranking groups in the finance context commonly lacking positive instances entirely. 
Lastly, deep learning based ranking models, though advanced, often lack the robustness of traditional gradient-boosted decision trees (GBDT) on several information retrieval benchmarks \cite{rankformer}. 
This limitation becomes pronounced when applied to financial data, as our experiments demonstrate a significant drop in performance within the financial domain.

To bridge the gap between deep learning based ranking algorithms and the unique requirements of the financial domain, this paper introduces profit-aware risk ranker (PA-RiskRanker), a novel LETOR method for ranking risky traders, particularly in spread trading scenarios with the challenge of market dynamics and highlight imbalanced datasets, and improve their effectiveness. Our main contributions are:

\begin{itemize}
    \item Introduction of a Profit-Aware binary cross entropy (PA-BCE) loss, which incorporates the profit information in the maximum likelihood function. This novel approach addresses data imbalance and integrates key financial indicators into the ranking model’s objective function. We successfully demonstrate that in financial risks where fixed decision boundaries are less effective, framing risk assessment as a ranking problem is more suitable than traditional classification or anomaly detection methods. This rethinking opens the door for future research, including applications in money laundering, fraud detection, and credit scoring.

    \item Development of PA-RiskRanker, an enhanced transformer-based ranking architecture, incorporating the Self-Cross Trader Attention pipeline and our PA-BCE loss. 
    PA-RiskRanker captures both the intra and inter relationship among risky traders, and boosts the ability to model financial behaviour for transformer-based ranking algorithms. 
    We validate this through comprehensive evaluations against SOTA ranking benchmarks, as well as classification and anomaly detection benchmarks. 
    PA-RiskRanker demonstrates a 10\%-17\% increase in average profit compared to all benchmark models. 
    Specifically, it outperforms the current SOTA deep learning ranker, Rankformer \citep{rankformer}, with approximately a 10\% improvement in F1 score and a 10.5\% increase in average profit.
    These results highlight PA-RiskRanker's superior capability in predicting risky traders.
\end{itemize}

\section{Background}

\subsection{Contracts For Difference (CFD)}

In this study, we mainly focus on the spread trading related to retail contracts for difference (CFDs) which are a type of financial instrument that allow individual traders and the market maker to enter a contract to exchange the difference in the asset's price between the time the contract is opened and closed \cite{CFDs}. 
Without actually owning the underlying asset, the retail traders are allowed to trade with leverage, which means they are only required to put down a small deposit of the entire trading value, increasing the potential profits from larger positions than the initial investment. 
While CFDs offer opportunities for leveraged trading, they introduce substantial risk to both traders and market makers, necessitating sophisticated risk management strategies.

In the United Kingdom, CFD trading is a crucial component of the financial market \cite{MA2022330}. 
Market makers in this sector provide liquidity and set prices, but they encounter unique challenges.
The substantial size of the CFD market exposes market makers to significant commercial viability risks.
They must meet strict collateral requirements and face the potential risk of failing to fulfill obligations to the CFD Counterparty within a rolling 12-month period, which could necessitate a parent company guarantee. 

Additionally, the profitability of market makers is closely tied to their ability to manage risks posed by traders, particularly those engaging in high-risk strategies or manipulation tactics such as insider trading and market manipulation \cite{cfdrisks}. 
Risky traders may illegally profit from activities like price manipulation, money laundering, and insider trading, placing market makers in unfavourable positions during market upswings \cite{HILAL2022116429, dnn}. 
These challenges underscore the importance of effectively managing the risks associated with high-risk traders in the CFD market. 
Therefore, market makers must make real-time risk management decisions regarding whether to hedge specific trades.

\subsection{Financial Data Classification and Anomaly Detection}

The detection of risky traders in tabular data, labelled as 1 (risky) or 0 (non-risky), is primarily a binary classification and anomaly detection problem. Several models have been developed to address these two types of problems. 

Support Vector Machines (SVMs) were applied as effective tools for detecting market manipulation, outperforming multivariate statistical techniques \citep{svm-ann}. Decision Trees (DT) have been used to create decision rules for market manipulation detection \citep{daiz-dt}. 
The emergence of GBDTs, such as XGBoost \cite{xgboost} and LightGBM \cite{NIPS2017_6449f44a}, marked a significant advancement, with these methods being applied to financial risk prediction and fraud detection \cite{ZHANG2023104045,xgboost-financial-fraud}. 
The advent of deep learning has further expanded its application to credit scoring \cite{10.1155/2020/8706285}. 
For instance, \cite{KWON2025125327} proposed a Siamese neural network to enhance both the predictive power and stability of credit scoring in dynamic environments.
However, recent studies indicate that deep learning methods often do not outperform traditional tree-based models in tabular data due to their weaker robustness \cite{NEURIPS2022_0378c769}. 
This has led to the development of tabular-specific deep learning architectures \cite{huang2020tabtransformer,NEURIPS2021_9d86d83f}. 
Though these architectures have shown performance comparable to traditional models, their effectiveness in financial risk assessment remains uncertain due to the lack of financial context evaluation.

In anomaly detection, the introduction of the Isolation Forest (IF) method presented a novel unsupervised approach to isolating anomalies \cite{4781136}. 
Its application in financial risk assessment, especially in credit card fraud detection, has been more recent \cite{TOKOVAROV2022433}. 
Subsequent developments like DevNet \cite{pang2019deep} and Deep SAD \cite{ruff2020deep} have shown impressive results using Gaussian prior-driven anomaly score distributions and one-class hyperspheres, respectively. Other techniques for tabular data anomaly detection, such as FeaWAD \cite{Zhou2021FeatureEW}, SLAD \cite{10.5555/3618408.3620019}, and Deep Isolation Forest (DIF) \cite{xu2023deep}, have also yielded superior outcomes. However, the robustness of these methods in financial risk detection tasks remains to be fully validated due to a lack of specific financial benchmarks, either. 

Both classification algorithms and anomaly detection methods aim to establish absolute decision boundaries for making predictions. 
However, a gap emerges when these are applied to financial markets, where the risk boundary is dynamic and comparative risk holds greater significance.

\subsection{Learning to Rank}

LETOR algorithms aim at ranking a list of items based on their features. It has been successfully applied in many areas such as recommendation systems~\cite{duan2010empirical,10.1145/3477495.3531931,rank-insights,10.1145/3312528}, question answering~\cite{yang2016beyond, reimers2019sentence}, retrieval~\cite{joachims2002optimizing, 10.1145/3477495.3531958,10.1145/3477495.3531948,10.1145/3439861}, etc. 
However, the application of LETOR in the finance area is limited. Most related research appeared in the recent years and they focus on only stock selection~\citep{zhang2022constructing,10.1145/3309547,10.1145/3690624.3709234}. Existing LETOR algorithms can be categorised into three groups based on their loss functions: pointwise, pairwise, and listwise.

Pointwise methods predict relevance scores for individual query-item pairs. 
Classic models like OPRF~\citep{fuhr1989optimum}, TreeBoost~\citep{friedman2001greedy}, and RankSVM~\citep{Shashua02taxonomyof} fall under this category, along with modern methods like RankCNN~\citep{severyn2015learning}. 
Although these models are straightforward and computationally efficient, they often oversimplify the ranking task by neglecting inter-item relationships. 
As a result, recent research in LETOR has moved away from pointwise models \cite{lee-etal-2024-methods}. 
Given that inter-trader relationships are crucial in our context, where comparing traders is essential, we will not consider pointwise LETOR methods in this study.

Pairwise methods, such as RankNet~\citep{burges2005learning}, LambdaRank~\citep{burges2006learning}, LambdaMART \cite{lambdarank} and FRank~\citep{tsai2007frank}, focus on the relative ranking of item pairs. They are more sophisticated than pointwise models as they consider inter-item relationships. However, they may not optimally capture the complex structure of financial data, where the overall ranking of a group of traders is more informative than pairwise comparisons.

Listwise approaches, including works leveraging advanced deep learning techniques like attention mechanisms~\cite{pang2020setrank,rankformer}, consider multiple items together. They are most suited for financial contexts as they directly maximise ranking metrics and can model complex relationships among a group of traders. These approaches are beneficial in scenarios where the relative ranking of traders is more informative than their individual scores, especially considering the fluid nature of financial markets.

Beyond classical methods, several lines of work have explored distillation-based learning to rank approaches to improve robustness, de-biasing, and computational efficiency in ad hoc retrieval and web search~\cite{10.1145/3681784,counterfactualletor,10.1109/TKDE.2022.3152585}.
Such methods are primarily motivated by challenges unique to document retrieval and typically rely on implicit feedback such as click data or relevance judgments.

In the financial domain, LETOR has been applied in stock selection, incorporating techniques like graph-based ranking and sentiment analysis~\citep{zhang2022constructing, song2017stock, poh2021building}. However, there remains a lack of LETOR models specifically designed for financial risk measurement, particularly for the detection of risky traders.

Given the intricate and interconnected nature of financial trading, a listwise approach is more suitable for modelling traders behaviour and the level of their trading sophistication. This approach can effectively capture the collective dynamics and the relative profitability skills of traders in a market, which is crucial for understanding and predicting risk in financial scenarios.

\section{Dataset}
\label{sec:dataset}

This study leverages a real-world exchange trading dataset encompassing transaction-level trading records from individual retail traders \cite{dnn}, which offers a rare opportunity to evaluate our proposed approach. Recognising the sensitivity and privacy considerations associated with individual-level financial data, this dataset is not made publicly available to ensure compliance with data protection regulations and ethical guidelines. 
Nonetheless, to foster transparency and facilitate reproducibility to the extent possible, we provide a comprehensive description of the dataset's structure, the nature of the data collected, and the rigorous statistical analysis conducted in this section.

The trading dataset comprises 13,607,120 real-life trading data records generated by 20,514 active traders from November 2003 to July 2014. Each record represents a single trade made by one of the active traders.

\subsection{Feature Creation}

The feature creation procedure aligns with prior research conducted on this dataset \cite{dnn}. This previous work underscored the inadequacy of relying solely on historical trading performance for accurate trader ability assessment, prompting the incorporation of additional trader-specific information and characteristics into the dataset.

The representativeness of the training data is achieved by considering three key aspects. Firstly, traders' behaviours are encoded based on their previous performance, which reflects their risk-adjusted gains and losses. Secondly, the time-ordered information about the traders is included. Finally, automatic feature extraction is performed to identify features related to traders' risk-adjusted profitability.

To address these aspects, three groups of features are constructed. The first group of features is derived from interviews with experienced members of the STX dealing desk, who assigned scores from 1 to 7 to different behavioural traits of the traders. These features focus on traders' demographics, such as age.

The second group of features captures information about traders' past performance. A rolling window of the previous 20 trades is considered, as determined by the hedging policy of stock exchange dealing desk. This group includes features such as average winning rate, average profit rate, average profit duration, and Sharpe ratio, which indicates the risk-adjusted return.

The third group of features provides information about traders' preferences for markets and channels. By examining the traders' previous 20 trades, the number of trades made in different market clusters is counted, allowing the identification of traders' favourite market clusters based on their trading history.

All the constructed features are detailed in Table~\ref{tab:feats_description}, which includes the ``Identifier'' and ``Future'' groups. The ``Future'' group contains features that provide future information as anticipatory characteristics. These features are only used for label creation and financial metrics calculation.

\begin{table}[ht]
\centering
\resizebox{\textwidth}{!}{%
\begin{tabular}{llll}
\toprule
Group & Feature & Type & Description \\ \midrule
\multirow{2}{*}{1} & Age1-5 & Discrete & One-hot encoding to indicate which age group the trader belongs to. \\
 & Mobile & Continuous & Proportion of trades that are made on mobile devices \\ \midrule
\multirow{14}{*}{2} & PerFTSE20 & Continuous & Share of trades placed in FTSE100 \\
 & AVGPTS3\_20 & Continuous & P\&L in Points \textgreater{}= 3 during the last 20 trades \\
 & ProfitRate20 & Continuous & Average profit rate of the trader in the past 20 trades \\
 & WinTradeRate20 & Continuous & Trader’s average winning rate in the past 20 trades \\
 & SharpeRatio20 & Continuous & Mean/st.dev. of returns in the past 20 trades \\
 & ProfitxDur20 & Continuous & Interaction of ProfitRate20 and DurationRate20 \\
 & PassAvgReturn & Continuous & Average return of the pass trades \\
 & AvgShortSales20 & Continuous & Share of short positions in the past 20 trades \\
 & DurationRate20 & Continuous & Average time trader leaves winning vs losing position open \\
 & AvgOpen20 & Continuous & Average of the P\&L among trader’s past 20 trades \\
 & DurationRatio20 & Continuous & Mean trade duration (mins) / std.dev. trade duration of the past 20 trades \\
 & OrderCloseRate20 & Continuous & \% of trades closed by an order in the past 20 trades \\
 & TradFQ20 & Continuous & The number of trades on average that a typical trader poses daily \\
 & NumTrades & Continuous & Trades accumulated until the last 20 trades \\ 
 \midrule
\multirow{2}{*}{3} & Segment1-3 & Discrete & Categories of past 20 trades' average return: \textgreater{}.05, between 0 and .05, negative \\
 & MarketCluster0-9 & Discrete & Identification of traders’ favourite market clusters during the past 20 trades \\ 
 \midrule
\multirow{4}{*}{Future} & NextTotalPL\_GBP20 & Continuous & P\&L for the next 20 trades in the future \\
 & NextTotalPL\_GBP & Continuous & P\&L for the next 100 trades in the future \\
 & WinningRate & Continuous & Winning rate of the next 100 trades in the future \\
 & TotalTrades & Continuous & Total trades made by a trader \\ 
 \midrule
\multirow{2}{*}{Identifier} & accountid & Discrete & The account ID of the trader \\
 & Period & Discrete & Buckets of 20 trades per account \\ 
 \bottomrule
\end{tabular}%
}
\caption{Comprehensive overview of dataset features, including type and description.}
\label{tab:feats_description}
\end{table}

\subsection{Data Preprocessing and Features Selection}
\label{sec:data-preprocess}

\subsubsection*{Train-Validation-Test Split}
The dataset is divided into training (70\%), validation (10\%), and testing (20\%) subsets. We ensure a minority-to-majority class ratio of 1\% to 99\% in all the training, validation and test splits to simulate realistic class distribution scenarios.

\subsubsection*{Ranking Group Allocation}

Ranking models require data to be grouped, typically using a \textit{qid} identifier, to ensure items are ranked only within their respective groups. Since our trading dataset was not originally structured for ranking tasks, we developed a custom grouping method to adapt it.
For training, we grouped traders based on three criteria: (1) market, (2) ensuring each group has at least one risky trader to handle label imbalance, and (3) the period in which trades occurred. 
For testing, we applied only criteria (1) and (3) to ensure a realistic and unbiased evaluation scenario. Algorithm~\ref{alg:rank_group_allo} details the group allocation process.
We evaluated multiple group sizes (20, 30, 50, 100, 200) to thoroughly test performance across varying contexts.

\begin{algorithm}
\caption{Ranking Group Allocation}
\label{alg:rank_group_allo}
\textbf{Input:} group\_size, data\_split

$\text{grouped\_data} \gets []$

\ForAll{market in market\_clusters}{

    $ \text{market\_data} \gets \text{data\_split}[\text{data\_split}[``\text{market}"]==1]$
    
    \ForAll{p in Period}{
        $\text{period\_market\_data} \gets \text{market\_data}[\text{market\_data}[``\text{Period}"]==p]$

        \eIf{data\_split is train set}
        {

            $\text{risky\_list} \gets \text{period\_market\_data}[\text{period\_market\_data}[``\text{label}"]==1]$

            $\text{normal\_list} \gets \text{period\_market\_data}[\text{period\_market\_data}[``\text{label}"]==0]$
    
            \While{(risky\_list is not empty) and (normal\_list is not empty)}{
                $\text{risky} \gets \text{risky\_list.sample(min(len(risky\_list), 1))}$\;
                
                risky\_list.remove(risky)
                
                $\text{normal} \gets \text{normal\_list.sample}(\min(\text{len(normal\_list)}, \text{group\_size-1)})$
    
                normal\_list.remove(normal)
    
                grouped\_data.append(concat(risky, normal))
        
                }

        }{
            $\text{sample\_num} \gets \text{min(len(period\_market\_data), group\_size-1)}$
            
            $\text{sample\_list} \gets \text{period\_market\_data.sample(sample\_num)}$

            grouped\_data.append(sample\_list)
        
        }
        
    }
    
}

\textbf{Return} grouped\_data

\end{algorithm}

\subsubsection*{Feature Selection}

\begin{table}[h]
\centering
\begin{tabular}{lcccccc}
\toprule
\multirow{2}{*}{Features} & \multicolumn{2}{c}{Mean} & \multicolumn{2}{c}{Std. Dev.} &  \multicolumn{2}{c}{Skew}  \\
 & Normal & Risky & Normal & Risky & Normal & Risky \\ \midrule
AVGPTS3\_20 & 0.430 & 0.588 & 0.211 & 0.244 & 0.331 & -0.211 \\
AvgOpen20 & 0.535 & 0.638 & 0.220 & 0.345 & -0.337 & -0.627 \\
AvgShortSales20 & 0.485 & 0.419 & 0.270 & 0.330 & -0.025 & 0.258 \\
DurationRate20 & 0.320 & 0.355 & 0.120 & 0.132 & -0.151 & -0.464 \\
DurationRatio20 & 0.127 & 0.166 & 0.067 & 0.124 & 3.492 & 3.821 \\
PassAvgReturn & 0.502 & 0.540 & 0.053 & 0.121 & -0.305 & 0.736 \\
ProfitRate20 & 0.497 & 0.623 & 0.243 & 0.297 & 0.343 & -0.325 \\
ProfitxDur20 & 0.327 & 0.422 & 0.173 & 0.223 & 0.988 & 0.270 \\
SharpeRatio20 & 0.443 & 0.489 & 0.082 & 0.127 & 1.096 & 0.743 \\
WinTradeRate20 & 0.623 & 0.685 & 0.204 & 0.238 & -0.204 & -0.557 \\
PerFTSE20 & 0.249 & 0.157 & 0.356 & 0.279 & 1.163 & 1.906 \\
TradFQ20 & 0.363 & 0.314 & 0.292 & 0.285 & 1.153 & 1.394 \\
OrderCloseRate20 & 0.182 & 0.189 & 0.263 & 0.286 & 1.464 & 1.476 \\
NumTrades & 0.305 & 0.270 & 0.326 & 0.290 & 1.094 & 1.290 \\
\bottomrule
\end{tabular}%
\caption{Summary of Selected Features: key statistics and descriptions for each chosen feature, where features denoted with the suffix ``20'' indicate metrics derived from the traders' most recent 20 transactions.}
\label{tab:feats_stats}
\end{table}

Acknowledging the importance of feature engineering in deep learning models \citep{airbnb}, we meticulously selected features for our LETOR algorithms. 
For models that do not originally support categorical features, we specifically excluded attributes such as age group, market preferences, and segment group. 
Our preliminary evaluations indicated that even converting these categorical features to dummy variables resulted in a performance decline for models such as Rankformer~\cite{rankformer}. 
However, for other models in our study that can handle categorical data, these features were retained. 
For the 14 continuous features in Table~\ref{tab:feats_description}, we provide the distributional differences analysis between normal and risky traders, detailed in Table 2. 

\section{Problem definition}

Specifically, our decision task is to determine whether the market maker should hedge an upcoming trade $j$ from trader $i$ and monitor the traders' future activities, as shown in Equation~\eqref{eq:hedge_decision}.

\begin{equation}
\label{eq:hedge_decision}
    y_{ij} = 
    \begin{cases}
        1, & \text{hedge} \\
        0, & \text{not hedge}
    \end{cases}
\end{equation}

In the training phase of our supervised ranking models, accurately labelling the dataset poses a significant challenge due to two primary reasons. 
Firstly, there are considerable legal complexities associated with financial institutions sharing sensitive information regarding past trading anomalies and breaches with researchers. 
Secondly, the reliability of true labels is compromised by noise, which stems from the incomplete identification of risky traders.
This results in a scenario where a subset of genuinely risky traders are inaccurately labelled as normal, thereby diluting the integrity of the labels \cite{10.1145/3459637.3482433}.

To overcome these labelling challenges, we categorise traders exhibiting extremely high profitability as risky from the perspective of market makers, who are ultimately liable for the losses incurred due to these traders' activities. 
This labelling method is based on the understanding that traders with significantly high returns present a considerable risk to market makers by potentially exploiting market weaknesses for their gain. 
Importantly, this strategy also helps avoid the problem of noisy labels since returns directly reflect trading outcomes, providing a clear indicator of trader performance and risk level, which is aligned with \cite{dnn}.
Consequently, identifying these highly profitable traders aids market makers in refining their hedging strategies, enabling more effective risk management practices in the face of uncertain market dynamics. 

Formally, we define the risky label $y_{ij}$ through Equations \eqref{eq:return} to \eqref{eq:labels}. 
For a trader indexed by $i$ at the issuance of trade indexed by $j$, profitability is assessed over the subsequent 100 trades. 
Specifically, the future return for trader $i$ at trade $j$ is calculated as:

\begin{align}
\label{eq:return}
\text{Return}_{i,j} & = \frac{\sum_{j < k \leq j+100} P\&L_{i,k}}{\sum_{j < k \leq j+100} \text{Margin}_{i,k}},
\end{align}

where $P\&L_{i,k}$ denotes the profit and loss realised by trader $i$ on trade $k$, and $\text{Margin}_{i,k}$ represents the margin requirement imposed by the market maker for trader $i$ on trade $k$, computed as the product of the stake size and the required margin percentage.

Traders are labelled risky if their returns rank within the top $\alpha\%$ among all traders in the subsequent hundred trades following trade $j$, defined formally as:

\begin{align}
\label{eq:top_alpha}
\text{Top}_\alpha (x, \mathcal{X}) & =
\begin{cases}
1, & \text{if } x \text{ ranks within the top } \alpha\% \text{ of } \mathcal{X} \\
0, & \text{otherwise}
\end{cases} \\
\label{eq:labels}
y_{ij} & = \text{Top}_\alpha(\text{Return}_{i,j}, {\text{Return}_{i',j'} \text{ for all traders } i' \text{ at all trade} j'})
\end{align}

In summary, the label $y_{ij}$ indicates whether trader $i$ is categorised as risky at the time trade $j$ is issued, based on their subsequent profitability relative to peers.

We focus on an extreme scenario where $\alpha = 1$\%. 
This choice allows us to examine traders who potentially engage in abnormal activities and aim to generate significant gains within a limited number of trades \citep{CHESNEY2015263}.

\paragraph{Being Lucky?} 

We believe our label definition mitigates the impact of traders achieving high profits by chance from four perspectives. 
Firstly, using a past 20-trade window and examining the subsequent 100 trades to identify risk aligns with the strategy employed by the data provider's dealing desk \cite{dnn}, which has proven effective in real-world applications. 
Secondly, focusing on the top 1\% of returns aligns with the ``extreme value theory'' in finance, which studies extreme deviations from the median of probability distributions. 
This approach is particularly useful for identifying outliers who may pose significant risks to market stability \cite{10.1093/rfs/hhg058extremevaluedependence}. 
By concentrating on these extreme cases, we aim to identify traders whose performance significantly deviates from the norm, thereby flagging potential anomalies that warrant closer scrutiny.
Thirdly, our processed data incorporates key control variables such as the number of trades (\textit{TradFQ20}), average return on past trades (\textit{PassAvgReturn}), and win rate (\textit{WinTradeRate20}), which serve as proxies for traders' experience and skill levels \cite{kumarfeatureimportance}. 
By accounting for these factors, we control for the inherent variability in trader proficiency, thereby ensuring that the identification of risky traders is not confounded by differences in experience or strategic competence. 
This comprehensive feature set allows the model to discern whether high returns are attributable to skillful trading or merely stochastic fluctuations.
Lastly, even if some traders achieve extremely high profits purely by luck, it is crucial to notice this so that market makers can plan their hedging strategies in advance to reduce losses.

\section{Methodology}
\label{sec:methodology}

Integrating a ranking algorithm to predict risky traders presents several challenges (CHs):

\begin{itemize}
    \item \textbf{CH1:} There is a significant class imbalance in risky trader prediction, complicating the task.
    \item \textbf{CH2:} Existing ranking methods \cite{lambdarank,rankformer} lack adaptation for financial metrics, making it difficult to prioritise extremely risky traders.
    \item \textbf{CH3:} The robustness of deep learning-based ranking models is limited, as highlighted in \cite{NEURIPS2022_0378c769,google-neural-outperform-gbdt}.
\end{itemize}

To address \textbf{CH1} and \textbf{CH2}, we propose the Profit-Aware Binary Cross Entropy (PA-BCE) loss. 
For \textbf{CH1}, we first demonstrate that a LambdaRank-like objective \citep{lambdarank} based on binary cross-entropy (BCE) loss effectively mitigates class imbalance. 
Building upon this insight, PA-BCE modifies the LambdaRank approach, which originally weights pairwise comparisons according to generic metric changes, by explicitly incorporating pairwise profit differences (P\&L gaps). Unlike standard LambdaRank, our method employs a customised financial metric defined by the logarithmic transformation of pairwise P\&L gaps, highlighting significant differences in trader risk. 
This targeted adjustment not only further alleviates class imbalance through balanced pairwise comparisons but also aligns the ranking objectives directly with key financial criteria, effectively addressing \textbf{CH2}.

For \textbf{CH3}, the lack of robustness arises from an incomplete representation of both individual feature dependencies within each trader’s profile and the interactions across traders \cite{huang2020tabtransformer,NEURIPS2021_9d86d83f}. 
Existing ranking models like Rankformer \cite{rankformer}, which rely solely on cross-item (trader) attention, capture only inter-trader comparisons and overlook intra-trader dependencies, which are critical interactions among features within a single trader’s profile.
To address this, we introduce a Self-Cross-Trader attention pipeline that integrates FT-Transformer embeddings \cite{NEURIPS2021_9d86d83f}, known for enhancing deep learning performance on tabular data. Self-trader attention uses contextual embeddings \cite{devlin-etal-2019-bert} to capture intra-trader or feature-level relationships. In contrast, cross-trader attention models pairwise comparisons between traders' contextual embeddings, capturing cross-trader relationships. This pipeline improves the model's ability to analyse individual behaviours and rank multiple traders in the financial market, thereby enhancing the robustness of deep learning-based models for predicting risky traders.

Figure~\ref{fig:model_arch} presents the complete pipeline, integrating the PA-BCE loss (Section~\ref{sec:PA-BCE}) and Self-Cross-Trader Attention mechanisms (Section~\ref{sec:self-cross-attention}). 

\begin{figure}[htbp]
\centering
\includegraphics[width=\textwidth]{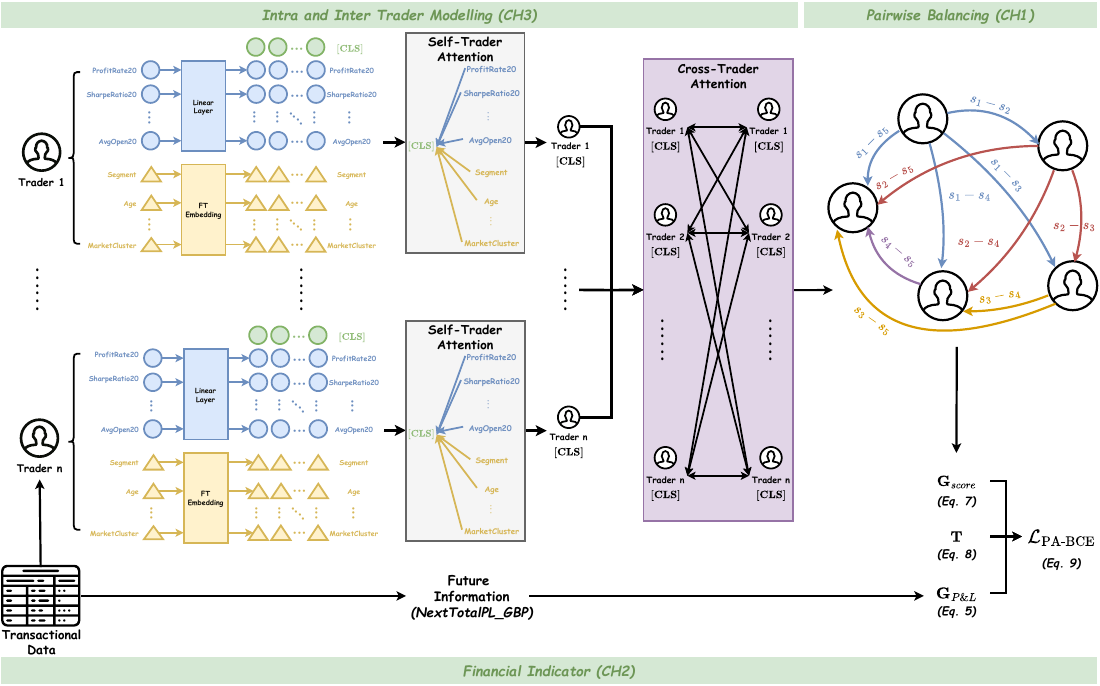}
\caption{Architecture of the PA-RiskRanker pipeline. Continuous features are processed through a linear layer, while categorical features are encoded using FT-Embedding. The \texttt{[CLS]} token in the self-trader attention mechanism captures feature dependencies, generating contextual embeddings. Cross-trader attention then analyses trader interrelationships to detect risky patterns, forming the $\mathbf{G}_{score}$ matrix. The pipeline utilises the \textit{NextTotalPL\_GBP} information for $\mathbf{G}_{P\&L}$ matrix construction and is trained with PA-BCE loss for precise ranking alignment.}
\label{fig:model_arch}
\end{figure}

\subsection{Profit-Aware BCE}
\label{sec:PA-BCE}

The primary challenge in our task is that existing ranking algorithms are designed for recommendation \cite{10.1145/3477495.3531931,peiPersonalizedRerankingRecommendation2019,rank-insights} and information retrieval systems \cite{ai2018learning,lambdarank,rankformer}. 
These tasks differ from ranking risky traders in two aspects. 

Firstly, severe class imbalance, with typically only about 1\% of traders being risky, poses substantial difficulties for listwise ranking methods, particularly those based on deep learning, as highlighted in prior studies \cite{surveyonclassimbalance} and confirmed by our preliminary results in Appendix~\ref{sec:preliminary} (\textbf{CH1}). 
Secondly, while pairwise ranking methods naturally mitigate imbalance by comparing items in pairs (see Proposition~\ref{proposition:balance}), standard approaches like LambdaRank \citep{lambdarank} do not account for the magnitude of financial risk differences, which can be explicitly quantified by traders' future profits (\textbf{CH2}).

To address these limitations, we propose the \textbf{Profit-Aware Binary Cross Entropy (PA-BCE)} loss. 
While leveraging the advantages of the LambdaRank-like pairwise objective, which is less sensitive to class imbalance, PA-BCE integrates multiple elements explicitly tailored to the financial domain. 
Specifically, we introduce a profit-based pairwise gap metric $\mathbf{G}_{P\&L}$ to replace the metric-change term ($\Delta$) in LambdaRank, represented through a graph-based relational structure \citep{gnn-paper} among traders. 
Each trader corresponds to a node in a fully connected graph, and edges explicitly encode financial risk differences.
Furthermore, given the vast numerical range of traders' profits, spanning from severe losses to substantial gains, we introduce a logarithmic transformation to the pairwise P\&L gaps. 
This logarithmic transformation ensures meaningful distinctions across extreme profit variations. 
To further stabilise training and enhance performance, we implement a top-K sampling strategy that explicitly prioritises the most financially significant traders.

\paragraph{Graph Construction and Loss Computation.}

Consider a ranking group $g$ consisting of $n$ traders, ordered by their true profits over the subsequent 100 trades (\textit{NextTotalPL\_GBP}), as recommended by prior studies \cite{dnn}. Each trader has embedded features from their 20 most recent trades, represented as $\{\mathbf{x}_1, \dots, \mathbf{x}_n \mid \mathbf{x}_i \in \mathbb{R}^d\}$.

In Equation~\eqref{eq:pnl_gap}, we define the P\&L gap matrix $\mathbf{G}_{P\&L}^{(g)}$, which functions as the customised metric-change term in the LambdaRank-like loss. Here, $p_i$ and $p_j$ represent the actual profits of traders $i$ and $j$, respectively. The logarithmic transformation is essential due to the extensive range of P\&L values in the dataset, which spans from $-1$ million to over 460K. This transformation normalises extreme profit variations, ensuring meaningful differentiation across all profit ranges while maintaining numerical stability. 

\begin{equation} \label{eq:pnl_gap} 
\mathbf{G}_{P\&L}^{(g)} = 
\begin{pmatrix} 
0 & \log(1 + p_1 - p_2) & \cdots & \log(1 + p_1 - p_n) \\ 
\log(1 + p_1 - p_2) & 0 & \cdots & \log(1 + p_2 - p_n) \\ 
\vdots & \vdots & \ddots & \vdots \\ 
\log(1 + p_1 - p_n) & \log(1 + p_2 - p_n) & \cdots & 0 
\end{pmatrix} 
\end{equation}

Given the predicted scores $(s_1, \dots, s_n)$ from the ranking model, we construct the standard LambdaRank score gap matrix, denoted as $\mathbf{G}_{\text{score}}^{(g)}$, as shown in Equation~\eqref{eq:g_score}. 
Here, $\sigma(\cdot)$ represents the sigmoid function.

\begin{equation} \label{eq:g_score} 
\mathbf{G}_{\text{score}}^{(g)} = 
\begin{pmatrix} 
0 & \sigma(s_1 - s_2) & \cdots & \sigma(s_1 - s_n) \\ 
\sigma(s_2 - s_1) & 0 & \cdots & \sigma(s_2 - s_n) \\ 
\vdots & \vdots & \ddots & \vdots \\ 
\sigma(s_n - s_1) & \sigma(s_n - s_2) & \cdots & 0 
\end{pmatrix} 
\end{equation}

The target matrix $\mathbf{T}^{(g)}$ indicating ground-truth rankings is defined as:

\begin{equation}
\label{eq:target_matrix}
\mathbf{T}^{(g)} =
\begin{pmatrix}
0 & 1 & \cdots & 1 \\
0 & 0 & \cdots & 1 \\
\vdots & \vdots & \ddots & \vdots \\
0 & 0 & \cdots & 0
\end{pmatrix}.
\end{equation}

For a set of $S$ ranking groups, the PA-BCE loss is calculated as:

\begin{equation}
\label{eq:loss}
\mathcal{L}_{\text{PA-BCE}} = \sum_{g=1}^{S} \sum_{i,j} \mathbf{G}_{P\&L}^{(g)}(i,j) \cdot \text{BCE}\left(\mathbf{G}_{\text{score}}^{(g)}(i,j), \mathbf{T}^{(g)}(i,j)\right),
\end{equation}
where BCE is the binary cross-entropy loss. 

In practice, due to the symmetry of the P\&L gap matrix and the properties of the loss function, we compute the loss over the upper triangular part of the matrices, as optimising over the upper triangular part is equivalent to optimising over the entire matrices (see Proposition~\ref{proposition:upper_triangular_equivalence}). 
This is equivalent to having a directed graph constructed from the ranking group such that traders with higher risks point to those with lower risks as demonstrated in Figure~\ref{fig:model_arch}.

\paragraph{Top-K Sampling Method.} 
Managing rankings of over 100 traders is challenging, as larger group sizes typically lead to training instability. Financial markets are inherently noisy, and increasing the number of traders introduces additional complexity, hindering model convergence \cite{RePEc:kap:fmktpm:v:34:y:2020:i:4:d:10.1007_s11408-020-00368-y,bouchaud2009theory}. To mitigate this, we apply top-K sampling, concentrating on the most profitable traders. This method, widely adopted in large-scale ranking problems \cite{pmlr-v31-eriksson13a,10.5555/3122009.3176847}, reduces noise by emphasising meaningful distinctions among top performers, resulting in more stable and effective training.

In summary, the proposed PA-BCE loss utilises graph modelling to incorporate profit information, enhancing precise risk quantification in ranking traders. 
This approach overcomes the limitations while maintaining the strengths of pairwise ranking on imbalanced financial data.

\subsection{Self-Cross-Trader Attention Pipeline}
\label{sec:self-cross-attention}

Recall that the last significant challenge (\textbf{CH3}) is the limited robustness of deep learning-based ranking models on the domain-specific financial risk prediction task. 
To tackle the issue of lacking intra-trader modelling, we propose the Self-Cross-Trader Attention pipeline that integrates both self-trader attention for intra-trader feature analysis and cross-trader attention for inter-trader comparisons. 
This dual approach enhances robustness by enabling the model to capture complex intra-trader relationships alongside competitive inter-trader dynamics.

\paragraph{Feature Embedding: Processing Continuous and Categorical Features.}

In the initial feature embedding stage, continuous and categorical features are processed separately to capture the unique characteristics of each type. 
For each trader, we transform the input features vector $\mathbf{x}$ into embeddings $\mathbf{E}_{\text{trader}} \in \mathbb{R}^{k \times d_k}$, where $k$ is the total number of features (continuous and categorical) and $d_k$ is the embedding dimension. The embedding for a given feature $x_j$ is computed as follows:

\begin{equation} 
    \mathbf{e}_j = \mathbf{b}_j + \mathbf{W}_j x_j \in \mathbb{R}^{d_k},
\end{equation}

where $\mathbf{W}_j \in \mathbb{R}^{d}$ is a weight vector that linearly projects the feature $x_j$ into a $d$-dimensional embedding space, and $\mathbf{b}_j$ is the corresponding bias vector. 
The embedding function $f_j$ thus applies a simple linear transformation to project the scalar feature $x_j$ into a $d$-dimensional representation. 

The embeddings are computed differently for continuous and categorical features:

\begin{itemize}
    \item \textbf{Continuous Features:} For a continuous feature $x_j^{\text{(num)}}$, the embedding uses a specific weight vector $\mathbf{W}_j^{\text{(num)}} \in \mathbb{R}^{d_k}$ and bias term $\mathbf{b}_j^{\text{(num)}}$:
    \begin{equation} 
        \mathbf{e}_j^{\text{(num)}} = \mathbf{b}_j^{\text{(num)}} + \mathbf{W}_j^{\text{(num)}} x_j^{\text{(num)}}.
    \end{equation}
    
    \item \textbf{Categorical (Discrete) Features:} Categorical features are handled differently due to their discrete nature. 
    For each categorical feature $x_j^{\text{(cat)}}$, there is a finite set of possible values.
    For example, a categorical feature like ``MarketCluster'' can have values 0 to 9 as shown in Table~\ref{tab:feats_description}.
    The number of distinct categories for a feature is called its vocabulary size, denoted $S_j$.
    Similar to \cite{NEURIPS2021_9d86d83f}, we use an embedding lookup table $\mathbf{W}_j^{\text{(cat)}} \in \mathbb{R}^{S_j \times d_k}$ to embed these categorical values into the $d_k$-dimensional space.
    Each row in this lookup table corresponds to a unique category in the feature's vocabulary and provides a learnable $d_k$-dimensional embedding for that category. 
    The embedding for a categorical feature $x_j^{\text{(cat)}}$ is computed as:
    \begin{equation}
        \mathbf{e}_j^{\text{(cat)}} = \mathbf{b}_j^{\text{(cat)}} + \mathbf{e}_j^\intercal \mathbf{W}_j^{\text{(cat)}} \in \mathbb{R}^{d_k},
    \end{equation}
    where $\mathbf{e}_j$ is a one-hot vector representing the categorical feature $x_j^{\text{(cat)}}$, and $\mathbf{b}_j^{\text{(cat)}}$ is the bias term for the categorical feature.
\end{itemize}

The final feature embeddings $\mathbf{E}_{\text{trader}}$ for each trader are obtained by stacking the continuous and categorical embeddings:

\begin{equation}
    \mathbf{E}_{\text{trader}} = \text{stack}\left[ \mathbf{e}_1, \mathbf{e}_2, \dots, \mathbf{e}_k \right] \in \mathbb{R}^{k \times d_k},
\end{equation}

where $\mathbf{e}_j$ represents either $\mathbf{e}_j^{\text{(num)}}$ or $\mathbf{e}_j^{\text{(cat)}}$, depending on the type of feature. This stacked representation $\mathbf{E}_{\text{trader}}$ forms the unified feature embedding for each trader, which is then used as input to the self-trader attention mechanism in the following stage.

\paragraph{Self-Trader Attention: Capturing Intra-Trader Dependencies.}

After embedding, a \texttt{[CLS]} token is appended to the sequence to aggregate information across all features. The extended input sequence for self-trader attention becomes:

\begin{equation}
    \mathbf{E}_{\text{trader}}^{\text{[CLS]}} = \text{stack}[\text{[CLS]}, \mathbf{e}_1, \mathbf{e}_2, \dots, \mathbf{e}_k],
\end{equation}

where \texttt{[CLS]} is initialised as a learnable vector. The \texttt{[CLS]} token, along with the embedded features, is then processed through self-attention. 
Self-trader queries $ Q_s $, keys $ K_s $, and values $ V_s $ are computed as follows:

\begin{equation}
    Q_s = \mathbf{E}_{\text{trader}}^{\text{[CLS]}} W_s^Q, \quad K_s = \mathbf{E}_{\text{trader}}^{\text{[CLS]}} W_s^K, \quad V_s = \mathbf{E}_{\text{trader}}^{\text{[CLS]}} W_s^V,
\end{equation}

where $ W_s^Q $, $ W_s^K $, and $ W_s^V $ are the weight matrices for the self-attention mechanism. 
Self-trader attention is derived from the original self-attention \cite{attention} mechanism as follows:

\begin{equation}
\label{eq:self-attention}
    \text{Attention}(Q_s, K_s, V_s) = \text{softmax}\left(\frac{Q_s K_s^\intercal}{\sqrt{d_k}}\right)V_s,
\end{equation}

The \texttt{[CLS]} token accumulates information from the entire feature sequence, creating a condensed representation of the trader’s profile that captures intra-trader dependencies. 
This is a well-established technique in modern language models like BERT \cite{devlin-etal-2019-bert}, and it has been proven to be effective in gathering sequence-level information. 
The enhanced [CLS] representation is then passed to the cross-trader attention mechanism.

\paragraph{Cross-Trader Attention: Modelling Inter-Trader Relationships.}

Once the \texttt{[CLS]} token for each trader has been refined through self-attention, these tokens are used in the cross-trader attention stage. Here, the \texttt{[CLS]} tokens from different traders serve as inputs, allowing the model to capture competitive dynamics by comparing each trader’s summary representation with others.

Given two \texttt{[CLS]} representations from different traders, denoted as $[\text{CLS}]_i$ and $[\text{CLS}]_j$, cross-trader attention is computed as:

\begin{equation}
    \text{Cross-Trader Attention}([\text{CLS}]_i, [\text{CLS}]_j) = \text{softmax}\left(\frac{[\text{CLS}]_i W_c^Q ([\text{CLS}]_j W_c^K)^\intercal}{\sqrt{d_k}}\right) [\text{CLS}]_j W_c^V.
\end{equation}

where $ W_c^Q $, $ W_c^K $, and $ W_c^V $ are the weight matrices for the cross-trader attention. 
This operation enables each trader’s \texttt{[CLS]} token to interact with other traders’ \texttt{[CLS]} tokens, capturing inter-trader relationships such as shared risk patterns or competitive influences.

Our Self-Cross-Trader Attention pipeline allows for a comprehensive analysis of both intra-trader and inter-trader dimensions, capturing the individual behaviour of traders as well as their interactions within a group. 

\paragraph{Training Strategy.}

Inspired by advancements in computer vision and natural language processing (NLP), we further employ a two-step training strategy to enhance the representation of traders and improve the robustness of our Self-Cross-Trader Attention pipeline. 
In the first step, we pretrain the contextual \texttt{[CLS]} embedding, derived from self-trader attention, on the risky trader classification task using the FT-Transformer architecture \citep{NEURIPS2021_9d86d83f}. 
This pretraining stage enables the model to gains an initial understanding of essential intra-trader dependencies and feature interactions unique to each trader’s profile, thereby reducing the complexity of the downstream ranking task \cite{wang-etal-2020-pretrain}.

Once pretraining is complete, we incoporate these enriched \texttt{[CLS]} embeddings with the cross-trader attention mechanism to model inter-trader relationships. 
During this stage, we fine-tune the embeddings using the PA-BCE loss to align them with the specific objectives of the ranking task.
This fine-tuning step ensures that the embeddings are further refined to capture competitive dynamics and relative risk patterns between traders, which are essential for accurate ranking.

\section{Experiments}
\label{sec:exp_setup}

This section details three comprehensive sets of evaluations designed to test the benchmark models.

Our first evaluation focuses on ranking performance. Here, we specifically assess the effectiveness of our PA-BCE loss and the enhancements brought by the refined architecture. This phase aims to elucidate the impact of our modifications on the model's ability to accurately rank traders in terms of risk.

Secondly, given that ranking models are not customarily assessed using classification metrics, we operate the evaluation under the prior knowledge that 1\% of the traders in our dataset are risky.
This setup allows us to evaluate the ranking models using classification and anomaly detection metrics. 
To validate that this setup does not inherently favour ranking models, we also conduct traditional prediction experiments without prior knowledge of the proportion of risky traders.

Finally, we implement a two-step evaluation process, aligning with the practical need for interpretability. 
This allows us to assess how effectively the benchmark models with less interpretability contribute to improving the overall performance of the interpretable models, crucial for real-world financial decision-making.

Our primary experiments introduced in this section utilise the private financial transaction dataset as mentioned in Section~\ref{sec:dataset}. To ensure reproducibility, we also evaluate our approach using two publicly available datasets\footnote{The codes and preprocessed data are available at \url{https://github.com/waylonli/PARiskRanker}.}: one for credit card fraud detection and another for job profit prediction. We adapt these original tasks to align with our scenario. 
The results demonstrate the transferability of our proposed approach. 
Additional details can be found in Appendix~\ref{appendix:reproducible-exp}.

\subsection{Metrics} 
\label{sec:metrics}

We employ two sets of metrics in this paper respectively for evaluating ranking and classification performance. 

To measure the performance of the ranking algorithms, we use Normalised Discounted Cumulative Gain (NDCG) and Mean Reciprocal Rank (MRR) which are two of the most popular and representative metrics for evaluating any ranking model \cite{lambdarank,pang2020setrank,rankformer}.

\paragraph{Normalised Discounted Cumulative Gain.}
NDCG (Equation~\eqref{eq:ndcg}) considers the entire ranked list and penalises the scenario where risky traders are assigned lower ranks.
When the dataset includes graded relevance values, NDCG is an appropriate choice as it incorporates relevance information.

\begin{align}
\textrm{DCG}p & = \sum{i=1}^p \frac{2^{rel_i}-1}{\log_2 (i+1)} \\
\textrm{IDCG}p & = \sum{i=1}^{|REL_p|} \frac{2^{rel_i}-1}{\log_2 (i+1)} \\
\textrm{nDCG}_p & = \frac{\textrm{DCG}_p}{\textrm{IDCG}_p}
\label{eq:ndcg}
\end{align}

\paragraph{Mean Reciprocal Rank.}

MRR (Equation~\eqref{eq:mrr}) is another critical metric for assessing ranking models, in our case, emphasising the model's effectiveness in identifying the first occurrence of a risky trader in the ranking group. It is particularly insightful for evaluating how promptly a model can highlight the highest-risk traders at the top of the ranking.

\begin{equation}
\label{eq:mrr}
    \textrm{MRR} = \frac{1}{|Q|} \sum_{j=1}^{|Q|} \frac{1}{\text{rank}_j}
\end{equation}

\paragraph{Classification Metrics.}
For classification models, standard evaluation metrics such as specificity, precision, macro $\text{F}_1$ score, and Area Under the Curve (AUC) are typically employed. 
The omission of accuracy as a metric is deliberate. Even a naive baseline that predicts everyone as normal traders can achieve a high accuracy of 99\% on the whole dataset, given that only the top 1\% of data with the highest profit rates are labelled as anomalies.

The aforementioned metrics, while effective in many scenarios, may not fully capture the nuances of financial data analysis. Particularly, they often fall short in addressing the asymmetrical risk-reward profile inherent in financial contexts.
To bridge this gap and align our evaluation more closely with our specific goals, we incorporate the P\&L metric. This metric provides a more direct measurement of financial performance, reflecting the actual economic impact of the model's predictions and ensuring a more nuanced and relevant assessment for our financial tasks \cite{dnn}.
In the financial market, where trade frequency can exceed 25 million observations per minute \cite{https://doi.org/10.1111/j.0306-686X.2004.00552.x}, even small gains, such as £0.1 per 20 trades, can accumulate into substantial profits due to the high transaction volume.

\paragraph{P\&L}

We aim to simulate the decision-making process of market makers faced with the choice to hedge or not hedge an imminent trade $j$ from trader $i$.
The decision to hedge is informed by the predicted label $y_{ij} \in \{0,1\}$, as articulated in Equation~\eqref{eq:hedge_decision}. 
Hedging a trade, when done completely, is assumed to neutralize the market maker's financial exposure, resulting in neither profit nor loss from that specific trade. 
In contrast, failing to hedge a trade exposes the market maker to the financial outcome of the trade: absorbing losses equivalent to the trader’s profits or gaining profits equivalent to the trader's losses.

Once trader $i$ has been marked as risky, the market makers will continue hedging the following 20 trades (including the current trade) and re-evaluate the status of this trader after that. Therefore, the metric is calculated using the P\&L of the future 20 trades, denoted as $\text{NextProfit\_20}$. 
For trades labelled as non-high-risk ($y_i = 0$), the market maker's P\&L reflects the inverse of trader $i$'s subsequent 20 trades' outcomes. 
In contrast, for trades identified as high-risk ($y_i = 1$), hedging actions render the P\&L neutral, indicating no financial gain or loss, as detailed in Equation~\eqref{eq:pnl}. 
This method abstracts away from transactional costs for simplicity.

\begin{equation}
\label{eq:pnl}
    P\&L = \sum_{i,j} -(1-y_{ij}) * \textrm{NextProfit\_20}_{ij}
\end{equation}

\subsection{Benchmark Models}

Our task encompasses multiple perspectives: ranking, classification, and anomaly (outlier) detection. Consequently, we included three distinct groups of benchmark models respectively.

\paragraph{Ranking Benchmarks.}

We evaluate PA-RiskRanker against a comprehensive suite of ranking baselines. 
First, we include Rankformer \cite{rankformer}, a representative transformer-based ranking model.
We also consider LambdaLoss \cite{lambdaloss} as an alternative objective for Rankformer, given its effectiveness in optimising ranking metrics directly.
We also incorporate Surrender on Outliers and Rank (SOUR) \cite{sour}, which explicitly handles outlier items within queries to boost overall ranking quality.
In addition to neural methods, we benchmark against traditional gradient-boosted ranking via LambdaMART, widely regarded as a strong classical baseline \cite{lambdarank,pang2020setrank,rankformer}.
We employ the implementation of LambdaMART via \texttt{LGBMRanker} from the LightGBM library \cite{NIPS2017_6449f44a} with the standard \textit{LambdaRank} objective.
Finally, to assess the impact of our profit-aware objective, we replace the canonical $\Delta$ in \textit{LambdaRank} with our proposed $\mathbf{G}_{P\&L}$ from Section~\ref{sec:PA-BCE}, resulting in PA-$\lambda$MART, using the customised objective functionality in the LightGBM library.

We claim that these serve as adequate ranking benchmarks.
While GNNRank \cite{GNNRank} was considered, it was deemed unsuitable due to its focus on recovering global rankings from historical data rather than predictive modelling. 
GNNRank performs well at determining global rankings, such as football team standings based on past game outcomes, but it cannot predict future rankings or performance changes. Consequently, it was excluded. 
Additionally, Rankformer was chosen over SetRank \cite{pang2020setrank,rankformer} because it is an upgraded version of SetRank, making the inclusion of both redundant.
We also reviewed several recent neural ranking approaches from the information retrieval literature, each making valuable contributions to their domains. 
The self-distilled learning to rank method~\cite{10.1145/3681784} advances document retrieval by improving robustness through down-weighting uncertain or noisy samples, whereas our goal is to explicitly identify and prioritise outliers (risky traders). 
Counterfactual learning to rank via knowledge distillation~\cite{counterfactualletor} addresses bias and generalisation in settings with implicit feedback like click data, which is fundamentally different from our supervised, profit-based context. 
Distilled neural networks for efficient ranking~\cite{10.1109/TKDE.2022.3152585} focus on compressing tree ensembles for computational efficiency, but efficiency is not the primary concern in our risk-focused financial applications. Thus, while impactful in their original domains, these methods are not directly applicable as benchmarks for our task.

\paragraph{Classification Benchmarks.}

Given that the task of detecting risky traders can be approached as a binary classification problem, 
we incorporate the complete FT-Transformer model \cite{NEURIPS2021_9d86d83f}, regarded as a ``best case'' scenario for deep learning applications on tabular data \cite{NEURIPS2022_0378c769}. 
We also incorporate the \verb|XGBClassifier|\footnote{\url{https://xgboost.readthedocs.io/en/stable/python/python_api.html}} from XGBoost \cite{xgboost}, especially relevant due to the performance enhancements in its 2.0 version released in September 2023. 
Additionally, the \verb|LGBMClassifier|\footnote{\url{https://lightgbm.readthedocs.io/en/latest/pythonapi/lightgbm.LGBMClassifier.html}} from LightGBM \cite{NIPS2017_6449f44a} and the Random Forest classifier are also part of our model lineup, offering a comprehensive evaluation of classification approaches.

\paragraph{Anomaly Detection Benchmarks.}

Anomaly detection offers another perspective for modelling the risky trader detection task. This approach is prevalent in various domains, such as emerging disease detection, financial fraud identification, and fake news detection \cite{10.1145/3580305.3599302}. Anomaly detectors focus on learning the general distribution of normal data and identifying instances that significantly deviate from this majority pattern.

We select a range of recent anomaly detection models tailored for tabular data. These models encompass both unsupervised and weakly-supervised methods, including Deep SAD (weakly supervised) \cite{ruff2020deep}, FeaWAD (weakly-supervised) \cite{Zhou2021FeatureEW}, SLAD (unsupervised) \cite{10.5555/3618408.3620019}, and DIF (unsupervised) \cite{xu2023deep}. For practical implementation, we utilise the DeepOD\footnote{\url{https://deepod.readthedocs.io/en/latest/}} library \cite{xu2023deep}, which features 27 deep outlier detection / anomaly detection algorithms designed for a variety of applications.

\paragraph{Technical Setup.}

To ensure that all benchmark models were rigorously trained and assessed, we engaged in comprehensive hyperparameter optimisation to achieve adequate convergence in terms of loss minimisation.
The exhaustive list of hyperparameters utilised for each model is delineated in Appendix~\ref{appendix:hyper-parameters}. 
In configuring the FT-Transformer, we aligned the number of transformer layers to correspond with the number of blocks present in both the Rankformer and our PA-RiskRanker, ensuring architectural consistency.


\subsection{Ranking Performance}
\label{sec:rank_perform}

In alignment with the preliminary experiments detailed in Appendix~\ref{sec:preliminary}, we train our benchmark ranking models with various training group sizes, maintaining a consistent test group size of 100. 
The training group size is treated as a hyperparameter, and we select the optimal combination for each model to showcase its performance.

For the LambdaMART model, we employ configurations with both \(10^3\) and \(10^4\) trees, aligning with the setup in \cite{rankformer}. 
This is crucial, as advanced neural ranking methods like SetRank and Rankformer have faced challenges in outperforming LambdaMART even with \(10^3\) trees \cite{pang2020setrank,rankformer}.

Furthermore, the incorporation of the Self-Cross Trader Attention pipeline introduces an inconsistency in model size. 
To address this, we adapt the original Rankformer architecture with \textit{LogSoftmax} loss to include 6 layers of transformer encoders. 
This configuration is modified to 2 layers dedicated to the self-trader attention and 4 layers for cross-trader attention in PA-RiskRanker, ensuring a fair comparison that accounts for the added complexity of the intra-trader modelling.

\begin{table}[h!]
\centering
\resizebox{0.9\columnwidth}{!}{%
\begin{tabular}{lrrrr}
\toprule
Ranking Model & NDCG@3 & NDCG@5 & NDCG@10 & MRR \\ \toprule
$\lambda$MART ($10^3$) & 0.397 & 0.434 & 0.481 & 0.500 \\
$\lambda$MART ($10^4$) & \textbf{0.410} & \textbf{0.441} & \textbf{0.492} & \textbf{0.506} \\ \midrule
PA-RiskRanker & \textbf{0.408} & \textbf{0.445} & \textbf{0.494} & \textbf{0.508} \\
\textit{w/o PA-BCE} & 0.327 & 0.362 & 0.412 & 0.431 \\
\textit{w/o Self-Cross Trader Attention \& PA-BCE (Rankformer)} & 0.218 & 0.258 & 0.307 & 0.312 \\
\bottomrule
\end{tabular}%
}
\caption{Comparison of LambdaMART and Rankformer variants on NDCG and MRR metrics, calculated using the binary risky label.
The values represent the mean performance across three separate runs. 
PA-RiskRanker without Self-Cross Trader Attention and PA-BCE is equivalent to Rankformer.}
\label{tab:ranking_result}
\vspace{-0.5cm}
\end{table}

Table~\ref{tab:ranking_result} reveals a significant advancement in the PA-RiskRanker model through the incorporation of Self-Cross Trader Attention and the PA-BCE loss. 
PA-RiskRanker surpasses LambdaMART with $10^3$ trees and achieves comparable results to LambdaMART with $10^4$ trees on ranking metrics. 
It is noteworthy that while Rankformer (PA-RiskRanker without Self-Cross Trader Attention and PA-BCE) demonstrated performance comparable to LambdaMART on standard information retrieval datasets \citep{rankformer}, it faces challenges when directly applied to domain-specific datasets with differing characteristics.
This aligns with our objective, which is not to significantly outperform LambdaMART on general-purpose ranking metrics, but rather to close the gap between naive deep learning–based rankers such as Rankformer and strong tree-based baselines, while paving the way for improved performance on domain-specific financial objectives such as P\&L. 
This follows surveys of trading exchange risk managers and has been highlighted by previous literature \citep{dnn}. 
We evaluate such metrics separately in Section~\ref{sec:pred_with_prior}.

More specifically, the incorporation of embedding layers can enhance the model's robustness, as evidenced by an increase in NDCG@10 from 0.307 to 0.412 and MRR from 0.312 to 0.431 with just the embedding addition. 
This highlights the necessity of feature embeddings for ranking on tabular data. 
On the other hand, the customisation of the loss function to address domain-specific imbalanced datasets, as seen with the PA-BCE loss, has led to a substantial performance boost. 
In particular, integrating PA-BCE also resulted in a marked improvement in NDCG@10, increasing from 0.412 to 0.494, and in MRR, rising from 0.431 to 0.508. 
Such gains were also seen with the self-cross trader attention, though its benefit is maximised when combined with PA-BCE. These results highlight that both embedding layers and profit-aware loss are essential for risk ranking. 

\subsection{Prediction Performance}
\label{sec:pred_with_prior}

While our initial findings show improved performance in general ranking evaluation metrics like NDCG and MRR, these may not directly align with our specific task requirements. 
Therefore, we also aim to assess performance using traditional classification and anomaly detection metrics.
We showcase two evaluation methods for simulating two scenarios whether or not the market makers have the prior knowledge of the proportion of risky traders (1\%).
While more sophisticated methods exist for injecting prior knowledge into classification models, it's important to clarify that our key focus is on the comparison of benchmark performances, with the incorporation of prior knowledge serving merely as a contextual aspect of our model evaluation rather than a focal point of our research.

\paragraph{With Prior.}
To facilitate a fair comparison between ranking models and other benchmark models, we adopt a scenario with the prior knowledge that 1\% of traders are risky. 
Practically, for ranking models, we predict scores within each ranking group, concatenate these groups, and sort all candidates by their scores. 
The top 1\% traders with the highest scores are then labelled as risky. 
For classification and anomaly detection models, each trader receives a predictive score. These scores are similarly sorted, marking the top 1\% as risky.

\paragraph{Without Prior.}
Additionally, we conduct evaluations without prior knowledge. Here, instead of sorting by ranking scores, we apply a \textit{Sigmoid} function to these scores. Considering that ranking scores may not use 0.5 as a standard binary prediction threshold, we perform a grid search over thresholds in the range of \([0.1, 0.2, ..., 0.9]\) on the validation set, selecting the one yielding the best performance. This optimal threshold search is equally applied to classification and anomaly detection models to maintain evaluative fairness.

\begin{table}[h!]
\centering
\resizebox{0.95\textwidth}{!}{%
\begin{tabular}{llrrrrrr}
\toprule
Model & Type & F1 & P\&L & AUC & Precision & Sensitivity & Specificity \\ \midrule
\multicolumn{8}{c}{With Prior} \\ \midrule
DeepSAD & Anomaly Detection & 0.527 & 123.730 & 0.611 & 0.063 & 0.063 & 0.991 \\
SLAD & Anomaly Detection & 0.500 & 121.978 & 0.520 & 0.010 & 0.010 & 0.990 \\
DIF & Anomaly Detection & 0.498 & 124.428 & 0.486 & 0.007 & 0.007 & 0.990 \\
FeaWAD & Anomaly Detection & 0.498 & 121.483 & 0.481 & 0.005 & 0.005 & 0.990 \\ \midrule
FT-Transformer & Classification & 0.611 & 129.552 & 0.859 & 0.229 & 0.231 & 0.992 \\
XGB Classifier ($10^4$) & Classification & 0.631 & 128.543 & 0.883 & 0.268 & 0.270 & 0.993 \\
LGBM Classifier ($10^4$) & Classification & 0.607 & 126.580 & 0.863 & 0.221 & 0.224 & 0.992 \\
Random Forest & Classification & 0.585 & 125.726 & 0.831 & 0.177 & 0.178 & 0.992 \\ \midrule
$\lambda$MART ($10^4$) & Ranking & \textbf{0.649} & 127.889 & 0.884 & 0.304 & 0.307 & 0.993 \\
PA-$\lambda$MART ($10^4$) & Ranking & 0.614 & 123.315 & 0.887 & 0.234 & 0.236 & 0.992 \\
SOUR & Ranking & 0.501 & 122.248 & 0.364 & 0.012 & 0.012 & 0.990 \\
Rankformer & Ranking & 0.564 & 128.963 & 0.827 & 0.137 & 0.138 & 0.991 \\
LambdaLoss & Ranking & 0.565 & 123.947 & 0.831 & 0.138 & 0.139 & 0.991 \\ 
PA-RiskRanker & Ranking & 0.643 & \textbf{142.517} & \textbf{0.898} & 0.292 & 0.295 & 0.993 \\ \midrule
\multicolumn{8}{c}{Without Prior} \\ \midrule
DeepSAD & Anomaly Detection & 0.498 & 123.249 & 0.486 & 0.014 & 0.001 & 1.000 \\
SLAD & Anomaly Detection & 0.494 & 106.293 & 0.611 & 0.024 & 0.244 & 0.901 \\
DIF & Anomaly Detection & 0.481 & 113.435 & 0.520 & 0.009 & 0.079 & 0.908 \\
FeaWAD & Anomaly Detection & 0.478 & 114.132 & 0.481 & 0.007 & 0.068 & 0.901 \\ \midrule
FT-Transformer & Classification & 0.611 & 129.714 & 0.859 & 0.231 & 0.230 & 0.992 \\
XGB Classifier ($10^4$) & Classification & 0.628 & 129.412 & 0.883 & 0.366 & 0.203 & 0.997 \\
LGBM Classifier ($10^4$) & Classification & 0.590 & 126.056 & 0.863 & 0.282 & 0.140 & 0.997 \\
Random Forest & Classification & 0.523 & 124.624 & 0.831 & 0.546 & 0.027 & 1.000 \\ \midrule
$\lambda$MART ($10^4$) & Ranking & \textbf{0.648} & 127.001 & 0.884 & 0.348 & 0.266 & 0.995 \\
PA-$\lambda$MART ($10^4$) & Ranking & 0.595 & 123.046 & 0.887 & 0.138 & 0.401 & 0.975 \\
SOUR & Ranking & 0.483 & 113.368 & 0.364 & 0.007 & 0.059 & 0.918 \\
Rankformer & Ranking & 0.528 & 127.400 & 0.827 & 0.165 & 0.039 & 0.998 \\
LambdaLoss & Ranking & 0.567 & 124.936 & 0.831 & 0.119 & 0.189 & 0.986 \\
PA-RiskRanker & Ranking & 0.633 & \textbf{140.648} & \textbf{0.894} & 0.227 & 0.353 & 0.988 \\ \bottomrule
\end{tabular}%
}
\caption{Detailed performance breakdown of benchmark models across anomaly detection, classification, and ranking, in contexts with and without prior knowledge, focusing on F1, P\&L, AUC, and other metrics. The values represent the mean performance across three separate runs.}
\label{tab:predict_results}
\vspace{-0.3cm}
\end{table}

The comparative results presented in Table~\ref{tab:predict_results} indicate a clear trend: classification and ranking methods generally outperform anomaly detection algorithms on our dataset, particularly in key metrics such as F1 score, P\&L, and AUC score.
This trend is consistent across scenarios both with and without prior knowledge, underscoring the robustness of these methods. Despite the utility of anomaly detection in specific contexts, it falls short in effectiveness compared to classification and ranking for this particular task.

The best results are observed from ranking models. Our PA-RiskRanker model showcases a remarkable performance, particularly in terms of P\&L. Specifically, PA-RiskRanker yields a P\&L of £142.517 (per 20 trades) in the ``with prior'' scenario, and £140.648 in the ``without prior'' scenario. These values notably surpass those of FT-Transformer, the next best-performing model in terms of P\&L, which records £129.552 and £129.714 respectively in the same settings. This indicates a substantial improvement in P\&L by approximately 10.0\% and 8.4\% in each scenario.
As aforementioned, given the scale of the financial market \cite{https://doi.org/10.1111/j.0306-686X.2004.00552.x}, this improvement per 20 trades can rapidly accumulate across millions of transactions.

In terms of the F1 score, PA-RiskRanker also demonstrates robust performance, achieving scores of 0.643 and 0.633 in the ``with prior'' and ``without prior'' scenarios, respectively. These scores are competitive with LambdaMART, which scores slightly higher at 0.649 and 0.648, respectively. However, the difference in F1 scores is marginal compared to the substantial leap in P\&L.

Furthermore, the AUC scores of PA-RiskRanker, standing at 0.898 and 0.894 across the two scenarios, also indicate superior performance, especially when compared to other ranking models like the standard Rankformer. This reinforces the model's capability in discriminating between risky and non-risky traders effectively.

\begin{wrapfigure}{R}{0.5\textwidth}
    \centering
    \includegraphics[width=0.48\textwidth]{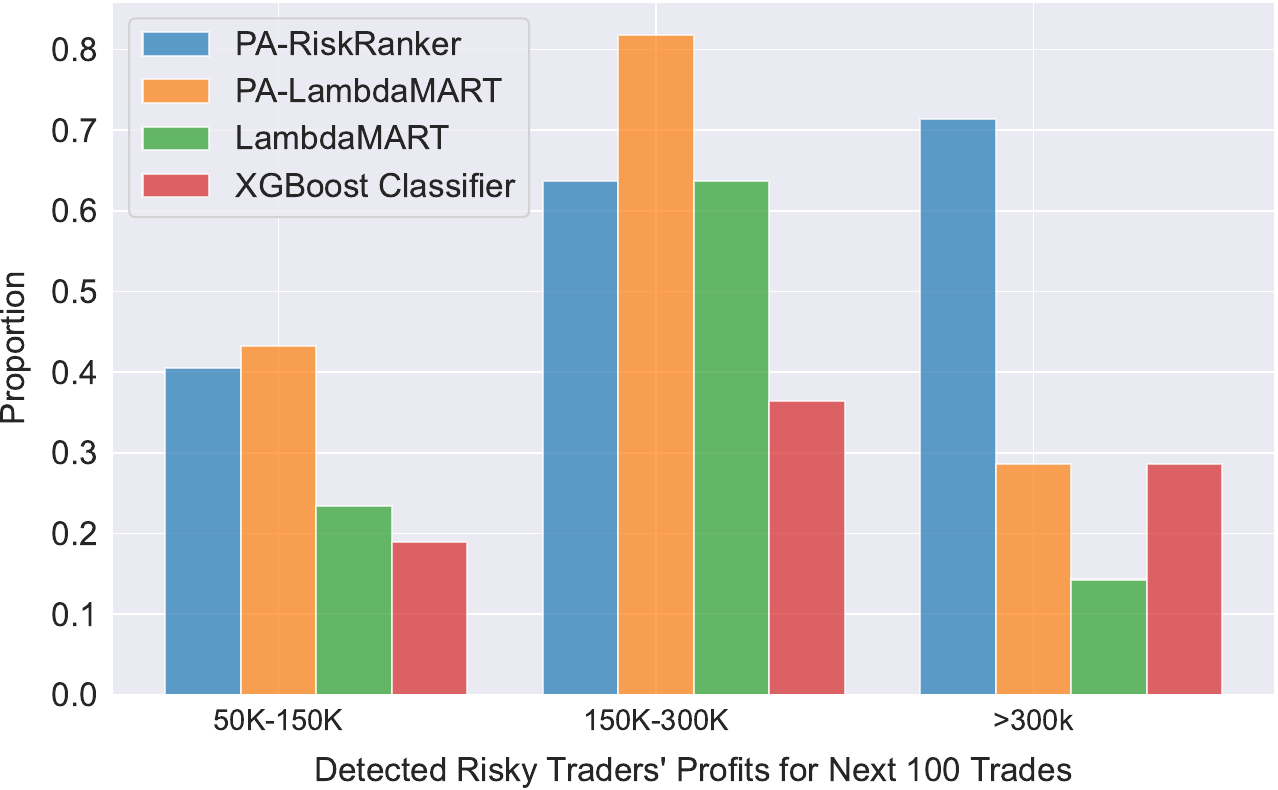}
    \caption{Proportion of correctly identified risky traders segmented by future profit ranges: £50K–£150K, £150K–£300K, and above £300K, across selected benchmark models.}
    \label{fig:anomly-profit-distribution}
\end{wrapfigure}

On the other hand, we observed that PA-$\lambda$MART does not outperform the original $\lambda$MART using the \textit{lambdarank} objective, suggesting that our PA-BCE loss alone is not fully compatible with the $\lambda$MART architecture. 
Specifically, in the ``without prior'' setting, PA-$\lambda$MART exhibits increased sensitivity but a drop in precision, leading to over-identification of potential risks. 
To probe this, we analysed traders whose future profits exceed £50K over the next 100 trades, using PA-RiskRanker, PA-$\lambda$MART, LambdaMART, and XGBoost Classifier. 
Notably, Figure~\ref{fig:anomly-profit-distribution} shows that PA-RiskRanker accurately identified over 71.4\% of traders with profits exceeding £300K, whereas LambdaMART and XGBoost identified only 28.6\%. 
This demonstrates the unique strength of PA-RiskRanker in capturing extreme-risk cases. 
While PA-$\lambda$MART detects more risky traders overall, its balance between sensitivity and precision is suboptimal without self-cross-trader attention. 
These results further highlight that it is the combination of the PA-BCE loss with our self-cross-trader attention mechanism that delivers superior detection of high-risk traders, with both components playing crucial, complementary roles.

\subsection{Two-step Evaluation}

In real-world settings, market makers favour interpretable models for predictions to facilitate analysis of feature importance and derive professional insights from the result \cite{CHEN2024357,DUMITRESCU20221178}. Accordingly, we have designed a two-step evaluation process. This approach assesses the extent to which initial ranking or predictive scores from less interpretable models can enhance overall predictive performance when integrated with more interpretable classification models.

Specifically, in the first step, a ranking algorithm or machine learning algorithm is used to generate ranking or predicting scores for the instances.
Subsequently, these generated scores are integrated into the dataset as an additional feature.
In the second step, interpretable classification models are trained on the updated data to provide binary predictions. 

For a comprehensive evaluation, we employ three machine learning methods, Random Forest, LGBM Classifier, and XGB Classifier, as second-step models in our two-step evaluation process.
To highlight the improvements brought to smaller models, we utilise LGBM and XGB classifiers with $10^3$ trees in the second step. This choice, favouring computational efficiency, is adequately representative for relative comparisons among models.
Each first-step benchmark model is paired with these classifiers, and we calculate their average performance to assess their collective efficacy thoroughly. To ensure fairness in our evaluations, we also conduct a grid search for the optimal threshold in the second-step classification models, aligning with our approach in previous experiments.

\begin{wraptable}{R}{0.5\textwidth}
\centering
\resizebox{0.42\textwidth}{!}{%
\begin{tabular}{lrrr}
\toprule
First Step Model & F1 & P\&L & AUC \\ \toprule
Single Step Avg. & 0.551 & 125.479 & 0.865 \\ \midrule
Random Forest & \cellcolor[HTML]{FFDFDE}-0.001 & \cellcolor[HTML]{D6FFD6}+0.160 & \cellcolor[HTML]{FFDFDE}-0.021 \\
XGB Classifier & \cellcolor[HTML]{D6FFD6}+0.024 & \cellcolor[HTML]{FFDFDE}-0.334 & \cellcolor[HTML]{FFDFDE}-0.166 \\
LGBM Classifier & \cellcolor[HTML]{D6FFD6}+0.023 & \cellcolor[HTML]{FFDFDE}-3.415 & \cellcolor[HTML]{FFDFDE}-0.120 \\
FT-Transformer & \cellcolor[HTML]{D6FFD6}+0.027 & \cellcolor[HTML]{D6FFD6}+2.611 & \cellcolor[HTML]{FFDFDE}-0.018 \\
DeepSAD & \cellcolor[HTML]{D6FFD6}+0.066 & \cellcolor[HTML]{D6FFD6}+1.227 & \cellcolor[HTML]{D6FFD6}+0.001 \\
DIF & \cellcolor[HTML]{FFDFDE}-0.011 & \cellcolor[HTML]{FFDFDE}-0.877 & \cellcolor[HTML]{FFDFDE}-0.007 \\
FeaWAD & \cellcolor[HTML]{D6FFD6}\textbf{+0.073} & \cellcolor[HTML]{D6FFD6}+2.519 & \cellcolor[HTML]{D6FFD6}\textbf{+0.004} \\
SLAD & \cellcolor[HTML]{D6FFD6}+0.061 & \cellcolor[HTML]{D6FFD6}+2.223 & \cellcolor[HTML]{FFDFDE}-0.003 \\
$\lambda$MART & \cellcolor[HTML]{D6FFD6}+0.040 & \cellcolor[HTML]{D6FFD6}+1.756 & \cellcolor[HTML]{FFDFDE}-0.164 \\
PA-$\lambda$MART & \cellcolor[HTML]{D6FFD6} +0.051 & \cellcolor[HTML]{D6FFD6} +1.740 & \cellcolor[HTML]{FFDFDE} -0.065 \\
SOUR & \cellcolor[HTML]{D6FFD6} +0.064 & \cellcolor[HTML]{D6FFD6} +3.491 & \cellcolor[HTML]{FFDFDE} -0.004 \\
Rankformer & \cellcolor[HTML]{D6FFD6}+0.053 & \cellcolor[HTML]{D6FFD6}+0.817 & \cellcolor[HTML]{FFDFDE}-0.006 \\ 
LambdaLoss & \cellcolor[HTML]{D6FFD6} +0.055 & \cellcolor[HTML]{FFDFDE} -0.821 & \cellcolor[HTML]{FFDFDE} -0.005 \\
\midrule
PA-RiskRanker & \cellcolor[HTML]{D6FFD6}+0.061 & \cellcolor[HTML]{D6FFD6}\textbf{+5.154} & \cellcolor[HTML]{FFDFDE}-0.005 \\
\bottomrule
\end{tabular}%
}
\caption{Performance comparison of various models in the two-step evaluation process, highlighting average changes in F1, P\&L, and AUC metrics when combined with second-step classifiers.}
\label{tab:twostep_results}
\end{wraptable}

Table~\ref{tab:twostep_results} summarises the key metrics such as F1 score, P\&L, and AUC for conciseness. The results indicate that anomaly detection models, particularly FeaWAD and DeepSAD, generally enhance the performance of second-step classifiers. Notably, FeaWAD leads with the highest increase in F1 score (+0.073) and a significant rise in P\&L (+2.519), while DeepSAD shows a substantial improvement in F1 score (+0.066) and a moderate increase in P\&L (+1.227). This trend underscores the value of integrating anomaly detection models into the two-step evaluation framework.

However, the most remarkable outcome is observed with the PA-RiskRanker model. 
It achieves the most substantial enhancement in terms of P\&L, with an impressive increase of +5.154, far exceeding other models. 
This significant improvement in P\&L highlights the PA-RiskRanker's capability to effectively identify profitable trading opportunities. 
Its F1 score also sees a notable rise of +0.061, although this is slightly lower than FeaWAD.

In addition, we examine the feature importance within the classifiers to provide a practical example of interpretability analysis. 
Appendix~\ref{appendix:feature-importance} showcases this aspect by demonstrating the significance of each feature in the classifiers’ decision-making processes. 
While this serves as a simple example, more sophisticated and domain-specific techniques could be applied for a deeper analysis. 
Importantly, our goal is to show that the improvements achieved by PA-RiskRanker are not only beneficial on their own but can also enhance the performance of other interpretable models or be integrated into ensemble learning frameworks.

This experiment confirms that our model significantly enhances the performance of interpretable classifiers, validating its practical applicability and effectiveness. 
This suitability is especially relevant in contexts where understanding and interpreting model predictions are as crucial as their performance.

\section{Conclusion}

Our research introduces a novel approach to detecting risky traders by advocating for a ranking-based methodology over traditional classification or anomaly detection techniques. 
The key innovation in our study is the introduction of the Profit-Aware Binary Cross Entropy (PA-BCE) loss function, integrated with a Self-Cross Trader Attention pipeline in the PA-RiskRanker model. 
This combination marks a significant advancement in the field, as it tailors the model to address the specific complexities and nuances of financial data.

Empirical evidence from our comprehensive experiments demonstrates the strength of our approach, particularly in optimising Profit and Loss (P\&L), a vital metric in the financial trading sector. 
The PA-RiskRanker model not only excels in traditional ranking metrics when compared to existing deep learning-based ranking models, but also surpasses recent models in classification and anomaly detection in terms of predictive performance.

Furthermore, our two-step evaluation process demonstrates the model's capacity to improve the performance of more interpretable classifiers. 
This aspect is particularly important for practical financial decision-making, where understanding the model's reasoning is as crucial as its predictive accuracy.

Overall, this study paves the way for the broader application of LEarning-TO-Rank (LETOR) algorithms in financial risk management, offering a new perspective on modelling and predicting individual behaviours in the financial context. 

Looking forward, we see considerable potential in extending our framework to related domains such as credit scoring, anti-money laundering, and fraud detection. 
Each of these application areas poses unique data structures, risk definitions, and regulatory requirements that would necessitate careful adaptation of our approach. 
For example, in credit scoring, the profit-aware metric may require redefinition to reflect long-term creditworthiness or loss given default \cite{Liao_Wang_Xue_Lei_Han_Lu_2022}, while anti-money laundering scenarios could benefit from modelling complex transaction networks or temporal sequences of activity \cite{Egressy_von_Niederhäusern_Blanuša_Altman_Wattenhofer_Atasu_2024,10.1145/3696410.3714576}.

Importantly, many real-world datasets in these domains include rich textual information \cite{han-etal-2018-nextgen,ahbali-etal-2022-identifying}, such as transaction descriptions, free-text justifications, or customer communications, which are not fully leveraged by traditional structured data models. Future work could incorporate advanced natural language processing techniques or pretrained language models to extract additional signal from these textual fields, integrating them directly into the risk-aware ranking framework. Adapting the PA-BCE loss and the self-cross attention mechanism to seamlessly fuse structured and unstructured features represents a critical next step. We believe this methodological flexibility is essential for achieving robust, domain-specific performance and will enable our approach to deliver practical value across a broad spectrum of high-stakes financial applications.

\begin{acks} 
We express our gratitude to the reviewers for their insightful comments, which significantly enhanced this paper. The experiments were supported by the Edinburgh Compute and Data Facility (ECDF). 
\end{acks}

\bibliographystyle{ACM-Reference-Format}
\bibliography{sample-base}


\begin{thebibliography}{85}


\ifx \showCODEN    \undefined \def \showCODEN     #1{\unskip}     \fi
\ifx \showDOI      \undefined \def \showDOI       #1{#1}\fi
\ifx \showISBNx    \undefined \def \showISBNx     #1{\unskip}     \fi
\ifx \showISBNxiii \undefined \def \showISBNxiii  #1{\unskip}     \fi
\ifx \showISSN     \undefined \def \showISSN      #1{\unskip}     \fi
\ifx \showLCCN     \undefined \def \showLCCN      #1{\unskip}     \fi
\ifx \shownote     \undefined \def \shownote      #1{#1}          \fi
\ifx \showarticletitle \undefined \def \showarticletitle #1{#1}   \fi
\ifx \showURL      \undefined \def \showURL       {\relax}        \fi
\providecommand\bibfield[2]{#2}
\providecommand\bibinfo[2]{#2}
\providecommand\natexlab[1]{#1}
\providecommand\showeprint[2][]{arXiv:#2}

\bibitem[Ahbali et~al\mbox{.}(2022)]%
        {ahbali-etal-2022-identifying}
\bibfield{author}{\bibinfo{person}{Noujoud Ahbali}, \bibinfo{person}{Xinyuan Liu}, \bibinfo{person}{Albert Nanda}, \bibinfo{person}{Jamie Stark}, \bibinfo{person}{Ashit Talukder}, {and} \bibinfo{person}{Rupinder~Paul Khandpur}.} \bibinfo{year}{2022}\natexlab{}.
\newblock \showarticletitle{{I}dentifying {C}orporate {C}redit {R}isk {S}entiments from {F}inancial {N}ews}. In \bibinfo{booktitle}{\emph{Proceedings of the 2022 Conference of the North American Chapter of the Association for Computational Linguistics: Human Language Technologies: Industry Track}}, \bibfield{editor}{\bibinfo{person}{Anastassia Loukina}, \bibinfo{person}{Rashmi Gangadharaiah}, {and} \bibinfo{person}{Bonan Min}} (Eds.). \bibinfo{publisher}{Association for Computational Linguistics}, \bibinfo{address}{Hybrid: Seattle, Washington + Online}, \bibinfo{pages}{362--370}.
\newblock


\bibitem[Ai et~al\mbox{.}(2018)]%
        {ai2018learning}
\bibfield{author}{\bibinfo{person}{Qingyao Ai}, \bibinfo{person}{Keping Bi}, \bibinfo{person}{Jiafeng Guo}, {and} \bibinfo{person}{W~Bruce Croft}.} \bibinfo{year}{2018}\natexlab{}.
\newblock \showarticletitle{Learning A Deep Listwise Context Model for Ranking Refinement}. In \bibinfo{booktitle}{\emph{Proceedings of the 41st International ACM SIGIR Conference on Research and Development in Information Retrieval}}. \bibinfo{pages}{135--144}.
\newblock


\bibitem[Ai et~al\mbox{.}(2021)]%
        {10.1145/3439861}
\bibfield{author}{\bibinfo{person}{Qingyao Ai}, \bibinfo{person}{Tao Yang}, \bibinfo{person}{Huazheng Wang}, {and} \bibinfo{person}{Jiaxin Mao}.} \bibinfo{year}{2021}\natexlab{}.
\newblock \showarticletitle{Unbiased Learning to Rank: Online or Offline?}
\newblock \bibinfo{journal}{\emph{ACM Trans. Inf. Syst.}} \bibinfo{volume}{39}, \bibinfo{number}{2}, Article \bibinfo{articleno}{21} (\bibinfo{date}{Feb.} \bibinfo{year}{2021}).
\newblock
\showISSN{1046-8188}


\bibitem[Block et~al\mbox{.}(2022)]%
        {10.1145/3477495.3531958}
\bibfield{author}{\bibinfo{person}{Adam Block}, \bibinfo{person}{Rahul Kidambi}, \bibinfo{person}{Daniel~N. Hill}, \bibinfo{person}{Thorsten Joachims}, {and} \bibinfo{person}{Inderjit~S. Dhillon}.} \bibinfo{year}{2022}\natexlab{}.
\newblock \showarticletitle{Counterfactual Learning to Rank for Utility-Maximizing Query Autocompletion}. In \bibinfo{booktitle}{\emph{Proceedings of the 45th International ACM SIGIR Conference on Research and Development in Information Retrieval}}. \bibinfo{publisher}{Association for Computing Machinery}, \bibinfo{address}{New York, NY, USA}, \bibinfo{pages}{791–802}.
\newblock
\showISBNx{9781450387323}


\bibitem[Bouchaud and Potters(2009)]%
        {bouchaud2009theory}
\bibfield{author}{\bibinfo{person}{Jean-Philippe Bouchaud} {and} \bibinfo{person}{Marc Potters}.} \bibinfo{year}{2009}\natexlab{}.
\newblock \bibinfo{booktitle}{\emph{Theory of Financial Risk and Derivative Pricing: From Statistical Physics to Risk Management}}.
\newblock \bibinfo{publisher}{Cambridge University Press}.
\newblock
\showISBNx{9780521741866}
\showLCCN{2003044037}


\bibitem[Brady and Ramyar(nd)]%
        {brady_ramyar}
\bibfield{author}{\bibinfo{person}{Chris Brady} {and} \bibinfo{person}{Richard Ramyar}.} \bibinfo{year}{n.d.}\natexlab{}.
\newblock \showarticletitle{White Paper on Spread Betting}. \bibinfo{publisher}{Cass Business School}.
\newblock


\bibitem[Burges et~al\mbox{.}(2006)]%
        {burges2006learning}
\bibfield{author}{\bibinfo{person}{Christopher Burges}, \bibinfo{person}{Robert Ragno}, {and} \bibinfo{person}{Quoc Le}.} \bibinfo{year}{2006}\natexlab{}.
\newblock \showarticletitle{Learning to Rank with Nonsmooth Cost Functions}.
\newblock \bibinfo{journal}{\emph{Advances In Neural Information Processing Systems}}  \bibinfo{volume}{19} (\bibinfo{year}{2006}).
\newblock


\bibitem[Burges et~al\mbox{.}(2005)]%
        {burges2005learning}
\bibfield{author}{\bibinfo{person}{Chris Burges}, \bibinfo{person}{Tal Shaked}, \bibinfo{person}{Erin Renshaw}, \bibinfo{person}{Ari Lazier}, \bibinfo{person}{Matt Deeds}, \bibinfo{person}{Nicole Hamilton}, {and} \bibinfo{person}{Greg Hullender}.} \bibinfo{year}{2005}\natexlab{}.
\newblock \showarticletitle{Learning to Rank Using Gradient Descent}. In \bibinfo{booktitle}{\emph{Proceedings of the 22nd international conference on Machine learning}}. \bibinfo{publisher}{Association for Computing Machinery}, \bibinfo{address}{New York, NY, USA}, \bibinfo{pages}{89--96}.
\newblock
\showISBNx{1595931805}


\bibitem[Burges(2010)]%
        {lambdarank}
\bibfield{author}{\bibinfo{person}{Chris J.~C. Burges}.} \bibinfo{year}{2010}\natexlab{}.
\newblock \showarticletitle{From RankNet to LambdaRank to LambdaMART: An Overview}.
\newblock


\bibitem[Buyl et~al\mbox{.}(2023)]%
        {rankformer}
\bibfield{author}{\bibinfo{person}{Maarten Buyl}, \bibinfo{person}{Paul Missault}, {and} \bibinfo{person}{Pierre-Antoine Sondag}.} \bibinfo{year}{2023}\natexlab{}.
\newblock \showarticletitle{RankFormer: Listwise Learning-to-Rank Using Listwide Labels}. In \bibinfo{booktitle}{\emph{Proceedings of the 29th ACM SIGKDD International Conference on Knowledge Discovery and Data Mining}}. \bibinfo{publisher}{Association for Computing Machinery}, \bibinfo{address}{New York, NY, USA}, \bibinfo{pages}{3762–3773}.
\newblock
\showISBNx{9798400701030}


\bibitem[Cai et~al\mbox{.}(2004)]%
        {https://doi.org/10.1111/j.0306-686X.2004.00552.x}
\bibfield{author}{\bibinfo{person}{Charlie~X. Cai}, \bibinfo{person}{Robert Hudson}, {and} \bibinfo{person}{Kevin Keasey}.} \bibinfo{year}{2004}\natexlab{}.
\newblock \showarticletitle{Intra Day Bid-Ask Spreads, Trading Volume and Volatility: Recent Empirical Evidence from the London Stock Exchange}.
\newblock \bibinfo{journal}{\emph{Journal of Business Finance \& Accounting}} \bibinfo{volume}{31}, \bibinfo{number}{5-6} (\bibinfo{year}{2004}), \bibinfo{pages}{647--676}.
\newblock


\bibitem[Chapelle and Chang(2011)]%
        {pmlr-v14-chapelle11a}
\bibfield{author}{\bibinfo{person}{Olivier Chapelle} {and} \bibinfo{person}{Yi Chang}.} \bibinfo{year}{2011}\natexlab{}.
\newblock \showarticletitle{Yahoo! Learning to Rank Challenge Overview}. In \bibinfo{booktitle}{\emph{Proceedings of the Learning to Rank Challenge}} \emph{(\bibinfo{series}{Proceedings of Machine Learning Research}, Vol.~\bibinfo{volume}{14})}. \bibinfo{publisher}{PMLR}, \bibinfo{address}{Haifa, Israel}, \bibinfo{pages}{1--24}.
\newblock


\bibitem[Chaudhuri and Tewari(2017)]%
        {10.5555/3122009.3176847}
\bibfield{author}{\bibinfo{person}{Sougata Chaudhuri} {and} \bibinfo{person}{Ambuj Tewari}.} \bibinfo{year}{2017}\natexlab{}.
\newblock \showarticletitle{Online Learning to Rank with Top-K Feedback}.
\newblock \bibinfo{journal}{\emph{J. Mach. Learn. Res.}} \bibinfo{volume}{18}, \bibinfo{number}{1} (\bibinfo{date}{Jan.} \bibinfo{year}{2017}), \bibinfo{pages}{3599–3648}.
\newblock
\showISSN{1532-4435}


\bibitem[Chen and Guestrin(2016)]%
        {xgboost}
\bibfield{author}{\bibinfo{person}{Tianqi Chen} {and} \bibinfo{person}{Carlos Guestrin}.} \bibinfo{year}{2016}\natexlab{}.
\newblock \showarticletitle{XGBoost: A Scalable Tree Boosting System}. In \bibinfo{booktitle}{\emph{Proceedings of the 22nd ACM SIGKDD International Conference on Knowledge Discovery and Data Mining}}. \bibinfo{publisher}{Association for Computing Machinery}, \bibinfo{address}{New York, NY, USA}, \bibinfo{pages}{785–794}.
\newblock
\showISBNx{9781450342322}


\bibitem[Chen et~al\mbox{.}(2024)]%
        {CHEN2024357}
\bibfield{author}{\bibinfo{person}{Yujia Chen}, \bibinfo{person}{Raffaella Calabrese}, {and} \bibinfo{person}{Belen Martin-Barragan}.} \bibinfo{year}{2024}\natexlab{}.
\newblock \showarticletitle{Interpretable Machine Learning for Imbalanced Credit Scoring Datasets}.
\newblock \bibinfo{journal}{\emph{European Journal of Operational Research}} \bibinfo{volume}{312}, \bibinfo{number}{1} (\bibinfo{year}{2024}), \bibinfo{pages}{357--372}.
\newblock
\showISSN{0377-2217}


\bibitem[Chesney et~al\mbox{.}(2015)]%
        {CHESNEY2015263}
\bibfield{author}{\bibinfo{person}{Marc Chesney}, \bibinfo{person}{Remo Crameri}, {and} \bibinfo{person}{Loriano Mancini}.} \bibinfo{year}{2015}\natexlab{}.
\newblock \showarticletitle{Detecting Abnormal Trading Activities In Option Markets}.
\newblock \bibinfo{journal}{\emph{Journal of Empirical Finance}}  \bibinfo{volume}{33} (\bibinfo{year}{2015}), \bibinfo{pages}{263--275}.
\newblock
\showISSN{0927-5398}


\bibitem[de~Prado(2020)]%
        {RePEc:kap:fmktpm:v:34:y:2020:i:4:d:10.1007_s11408-020-00368-y}
\bibfield{author}{\bibinfo{person}{Marcos M.~López de Prado}.} \bibinfo{year}{2020}\natexlab{}.
\newblock \showarticletitle{Machine Learning for Asset Managers}.
\newblock \bibinfo{journal}{\emph{Financial Markets and Portfolio Management}} \bibinfo{volume}{34}, \bibinfo{number}{4} (\bibinfo{date}{Dec.} \bibinfo{year}{2020}), \bibinfo{pages}{507--509}.
\newblock


\bibitem[Devlin et~al\mbox{.}(2019)]%
        {devlin-etal-2019-bert}
\bibfield{author}{\bibinfo{person}{Jacob Devlin}, \bibinfo{person}{Ming-Wei Chang}, \bibinfo{person}{Kenton Lee}, {and} \bibinfo{person}{Kristina Toutanova}.} \bibinfo{year}{2019}\natexlab{}.
\newblock \showarticletitle{{BERT}: Pre-training of Deep Bidirectional Transformers for Language Understanding}. In \bibinfo{booktitle}{\emph{Proceedings of the 2019 Conference of the North {A}merican Chapter of the Association for Computational Linguistics: Human Language Technologies, Volume 1 (Long and Short Papers)}}. \bibinfo{address}{Minneapolis, Minnesota}, \bibinfo{pages}{4171--4186}.
\newblock


\bibitem[Diaz et~al\mbox{.}(2011)]%
        {daiz-dt}
\bibfield{author}{\bibinfo{person}{David Diaz}, \bibinfo{person}{Babis Theodoulidis}, {and} \bibinfo{person}{Pedro Sampaio}.} \bibinfo{year}{2011}\natexlab{}.
\newblock \showarticletitle{Analysis of Stock Market Manipulations Using Knowledge Discovery Techniques Applied to Intraday Trade Prices}.
\newblock \bibinfo{journal}{\emph{Expert Systems with Applications}} (\bibinfo{year}{2011}), \bibinfo{pages}{12757--12771}.
\newblock
\showISSN{0957-4174}


\bibitem[Duan et~al\mbox{.}(2010)]%
        {duan2010empirical}
\bibfield{author}{\bibinfo{person}{Yajuan Duan}, \bibinfo{person}{Long Jiang}, \bibinfo{person}{Tao Qin}, \bibinfo{person}{Ming Zhou}, {and} \bibinfo{person}{Heung~Yeung Shum}.} \bibinfo{year}{2010}\natexlab{}.
\newblock \showarticletitle{An Empirical Study On Learning to Rank of Tweets}. In \bibinfo{booktitle}{\emph{Proceedings of the 23rd International Conference on Computational Linguistics (Coling 2010)}}. \bibinfo{pages}{295--303}.
\newblock


\bibitem[Dumitrescu et~al\mbox{.}(2022)]%
        {DUMITRESCU20221178}
\bibfield{author}{\bibinfo{person}{Elena Dumitrescu}, \bibinfo{person}{Sullivan Hué}, \bibinfo{person}{Christophe Hurlin}, {and} \bibinfo{person}{Sessi Tokpavi}.} \bibinfo{year}{2022}\natexlab{}.
\newblock \showarticletitle{Machine Learning for Credit Scoring: Improving Logistic Regression with Non-Linear Decision-Tree Effects}.
\newblock \bibinfo{journal}{\emph{European Journal of Operational Research}} \bibinfo{volume}{297}, \bibinfo{number}{3} (\bibinfo{year}{2022}), \bibinfo{pages}{1178--1192}.
\newblock
\showISSN{0377-2217}


\bibitem[Ebrahimzadeh et~al\mbox{.}(2024)]%
        {counterfactualletor}
\bibfield{author}{\bibinfo{person}{Ehsan Ebrahimzadeh}, \bibinfo{person}{Alex Cozzi}, {and} \bibinfo{person}{Abraham Bagherjeiran}.} \bibinfo{year}{2024}\natexlab{}.
\newblock \showarticletitle{Counterfactual Learning to Rank via Knowledge Distillation}.
\newblock \bibinfo{journal}{\emph{Proceedings of ACM SIGIR Workshop on eCommerce (SIGIR eCom’24)}}.
\newblock


\bibitem[Egressy et~al\mbox{.}(2024)]%
        {Egressy_von_Niederhäusern_Blanuša_Altman_Wattenhofer_Atasu_2024}
\bibfield{author}{\bibinfo{person}{Béni Egressy}, \bibinfo{person}{Luc von Niederhäusern}, \bibinfo{person}{Jovan Blanuša}, \bibinfo{person}{Erik Altman}, \bibinfo{person}{Roger Wattenhofer}, {and} \bibinfo{person}{Kubilay Atasu}.} \bibinfo{year}{2024}\natexlab{}.
\newblock \showarticletitle{Provably Powerful Graph Neural Networks for Directed Multigraphs}.
\newblock \bibinfo{journal}{\emph{Proceedings of the AAAI Conference on Artificial Intelligence}} \bibinfo{volume}{38}, \bibinfo{number}{10} (\bibinfo{date}{Mar.} \bibinfo{year}{2024}), \bibinfo{pages}{11838--11846}.
\newblock


\bibitem[Eriksson(2013)]%
        {pmlr-v31-eriksson13a}
\bibfield{author}{\bibinfo{person}{Brian Eriksson}.} \bibinfo{year}{2013}\natexlab{}.
\newblock \showarticletitle{Learning to Top-K Search using Pairwise Comparisons}. In \bibinfo{booktitle}{\emph{Proceedings of the Sixteenth International Conference on Artificial Intelligence and Statistics}} \emph{(\bibinfo{series}{Proceedings of Machine Learning Research}, Vol.~\bibinfo{volume}{31})}. \bibinfo{publisher}{PMLR}, \bibinfo{address}{Scottsdale, Arizona, USA}, \bibinfo{pages}{265--273}.
\newblock


\bibitem[Fan et~al\mbox{.}(2020)]%
        {10.1155/2020/8706285}
\bibfield{author}{\bibinfo{person}{Shuangshuang Fan}, \bibinfo{person}{Yanbo Shen}, \bibinfo{person}{Shengnan Peng}, {and} \bibinfo{person}{Min Xia}.} \bibinfo{year}{2020}\natexlab{}.
\newblock \showarticletitle{Improved ML-Based Technique for Credit Card Scoring in Internet Financial Risk Control}.
\newblock \bibinfo{journal}{\emph{Complex.}}  \bibinfo{volume}{2020} (\bibinfo{date}{Jan.} \bibinfo{year}{2020}).
\newblock
\showISSN{1076-2787}


\bibitem[Fan et~al\mbox{.}(2022)]%
        {10.1145/3477495.3531931}
\bibfield{author}{\bibinfo{person}{Xinyan Fan}, \bibinfo{person}{Jianxun Lian}, \bibinfo{person}{Wayne~Xin Zhao}, \bibinfo{person}{Zheng Liu}, \bibinfo{person}{Chaozhuo Li}, {and} \bibinfo{person}{Xing Xie}.} \bibinfo{year}{2022}\natexlab{}.
\newblock \showarticletitle{Ada-Ranker: A Data Distribution Adaptive Ranking Paradigm for Sequential Recommendation}. In \bibinfo{booktitle}{\emph{Proceedings of the 45th International ACM SIGIR Conference on Research and Development in Information Retrieval}}. \bibinfo{publisher}{Association for Computing Machinery}, \bibinfo{address}{New York, NY, USA}, \bibinfo{pages}{1599–1610}.
\newblock
\showISBNx{9781450387323}


\bibitem[Feng et~al\mbox{.}(2019)]%
        {10.1145/3309547}
\bibfield{author}{\bibinfo{person}{Fuli Feng}, \bibinfo{person}{Xiangnan He}, \bibinfo{person}{Xiang Wang}, \bibinfo{person}{Cheng Luo}, \bibinfo{person}{Yiqun Liu}, {and} \bibinfo{person}{Tat-Seng Chua}.} \bibinfo{year}{2019}\natexlab{}.
\newblock \showarticletitle{Temporal Relational Ranking for Stock Prediction}.
\newblock \bibinfo{journal}{\emph{ACM Trans. Inf. Syst.}} \bibinfo{volume}{37}, \bibinfo{number}{2}, Article \bibinfo{articleno}{27} (\bibinfo{date}{March} \bibinfo{year}{2019}).
\newblock
\showISSN{1046-8188}


\bibitem[Friedman(2001)]%
        {friedman2001greedy}
\bibfield{author}{\bibinfo{person}{Jerome~H Friedman}.} \bibinfo{year}{2001}\natexlab{}.
\newblock \showarticletitle{Greedy Function Approximation: A Gradient Boosting Machine}.
\newblock \bibinfo{journal}{\emph{Annals of statistics}} (\bibinfo{year}{2001}), \bibinfo{pages}{1189--1232}.
\newblock


\bibitem[Fuhr(1989)]%
        {fuhr1989optimum}
\bibfield{author}{\bibinfo{person}{Norbert Fuhr}.} \bibinfo{year}{1989}\natexlab{}.
\newblock \showarticletitle{Optimum Polynomial Retrieval Functions Based On The Probability Ranking Principle}.
\newblock \bibinfo{journal}{\emph{ACM Trans. Inf. Syst.}} \bibinfo{volume}{7}, \bibinfo{number}{3} (\bibinfo{year}{1989}), \bibinfo{pages}{183--204}.
\newblock


\bibitem[Gorishniy et~al\mbox{.}(2021)]%
        {NEURIPS2021_9d86d83f}
\bibfield{author}{\bibinfo{person}{Yury Gorishniy}, \bibinfo{person}{Ivan Rubachev}, \bibinfo{person}{Valentin Khrulkov}, {and} \bibinfo{person}{Artem Babenko}.} \bibinfo{year}{2021}\natexlab{}.
\newblock \showarticletitle{Revisiting Deep Learning Models for Tabular Data}. In \bibinfo{booktitle}{\emph{Advances in Neural Information Processing Systems}}, Vol.~\bibinfo{volume}{34}. \bibinfo{publisher}{Curran Associates, Inc.}, \bibinfo{pages}{18932--18943}.
\newblock


\bibitem[Grinsztajn et~al\mbox{.}(2022)]%
        {NEURIPS2022_0378c769}
\bibfield{author}{\bibinfo{person}{Leo Grinsztajn}, \bibinfo{person}{Edouard Oyallon}, {and} \bibinfo{person}{Gael Varoquaux}.} \bibinfo{year}{2022}\natexlab{}.
\newblock \showarticletitle{Why Do Tree-Based Models Still Outperform Deep Learning On Typical Tabular Data?}. In \bibinfo{booktitle}{\emph{Advances in Neural Information Processing Systems}}, Vol.~\bibinfo{volume}{35}. \bibinfo{publisher}{Curran Associates, Inc.}, \bibinfo{pages}{507--520}.
\newblock


\bibitem[Haldar et~al\mbox{.}(2019)]%
        {airbnb}
\bibfield{author}{\bibinfo{person}{Malay Haldar}, \bibinfo{person}{Mustafa Abdool}, \bibinfo{person}{Prashant Ramanathan}, \bibinfo{person}{Tao Xu}, \bibinfo{person}{Shulin Yang}, \bibinfo{person}{Huizhong Duan}, \bibinfo{person}{Qing Zhang}, \bibinfo{person}{Nick Barrow-Williams}, \bibinfo{person}{Bradley~C. Turnbull}, \bibinfo{person}{Brendan~M. Collins}, {and} \bibinfo{person}{Thomas Legrand}.} \bibinfo{year}{2019}\natexlab{}.
\newblock \showarticletitle{Applying Deep Learning to Airbnb Search}. In \bibinfo{booktitle}{\emph{Proceedings of the 25th {ACM} {SIGKDD} International Conference on Knowledge Discovery and Data Mining}}. \bibinfo{publisher}{ACM}.
\newblock


\bibitem[Han et~al\mbox{.}(2018)]%
        {han-etal-2018-nextgen}
\bibfield{author}{\bibinfo{person}{Jingguang Han}, \bibinfo{person}{Utsab Barman}, \bibinfo{person}{Jeremiah Hayes}, \bibinfo{person}{Jinhua Du}, \bibinfo{person}{Edward Burgin}, {and} \bibinfo{person}{Dadong Wan}.} \bibinfo{year}{2018}\natexlab{}.
\newblock \showarticletitle{{N}ext{G}en {AML}: Distributed Deep Learning based Language Technologies to Augment Anti Money Laundering Investigation}. In \bibinfo{booktitle}{\emph{Proceedings of {ACL} 2018, System Demonstrations}}, \bibfield{editor}{\bibinfo{person}{Fei Liu} {and} \bibinfo{person}{Thamar Solorio}} (Eds.). \bibinfo{publisher}{Association for Computational Linguistics}, \bibinfo{address}{Melbourne, Australia}, \bibinfo{pages}{37--42}.
\newblock


\bibitem[He et~al\mbox{.}(2022)]%
        {GNNRank}
\bibfield{author}{\bibinfo{person}{Yixuan He}, \bibinfo{person}{Quan Gan}, \bibinfo{person}{David Wipf}, \bibinfo{person}{Gesine~D Reinert}, \bibinfo{person}{Junchi Yan}, {and} \bibinfo{person}{Mihai Cucuringu}.} \bibinfo{year}{2022}\natexlab{}.
\newblock \showarticletitle{GNNRank: Learning Global Rankings from Pairwise Comparisons via Directed Graph Neural Networks}. In \bibinfo{booktitle}{\emph{International Conference on Machine Learning}}. PMLR, \bibinfo{pages}{8581--8612}.
\newblock


\bibitem[Hilal et~al\mbox{.}(2022)]%
        {HILAL2022116429}
\bibfield{author}{\bibinfo{person}{Waleed Hilal}, \bibinfo{person}{S.~Andrew Gadsden}, {and} \bibinfo{person}{John Yawney}.} \bibinfo{year}{2022}\natexlab{}.
\newblock \showarticletitle{Financial Fraud: A Review of Anomaly Detection Techniques and Recent Advances}.
\newblock \bibinfo{journal}{\emph{Expert Systems with Applications}}  \bibinfo{volume}{193} (\bibinfo{year}{2022}), \bibinfo{pages}{116429}.
\newblock
\showISSN{0957-4174}


\bibitem[Huang et~al\mbox{.}(2020)]%
        {huang2020tabtransformer}
\bibfield{author}{\bibinfo{person}{Xin Huang}, \bibinfo{person}{Ashish Khetan}, \bibinfo{person}{Milan Cvitkovic}, {and} \bibinfo{person}{Zohar Karnin}.} \bibinfo{year}{2020}\natexlab{}.
\newblock \showarticletitle{TabTransformer: Tabular Data Modeling Using Contextual Embeddings}.
\newblock \bibinfo{journal}{\emph{arXiv preprint arXiv:2012.06678}}.
\newblock


\bibitem[Joachims(2002)]%
        {joachims2002optimizing}
\bibfield{author}{\bibinfo{person}{Thorsten Joachims}.} \bibinfo{year}{2002}\natexlab{}.
\newblock \showarticletitle{Optimizing Search Engines Using Clickthrough Data}. In \bibinfo{booktitle}{\emph{Proceedings of the 8th ACM SIGKDD International Conference on Knowledge Discovery and Data Mining}}. \bibinfo{pages}{133--142}.
\newblock


\bibitem[Johnson and Khoshgoftaar(2019)]%
        {surveyonclassimbalance}
\bibfield{author}{\bibinfo{person}{Justin Johnson} {and} \bibinfo{person}{Taghi Khoshgoftaar}.} \bibinfo{year}{2019}\natexlab{}.
\newblock \showarticletitle{Survey On Deep Learning with Class Imbalance}.
\newblock \bibinfo{journal}{\emph{Journal of Big Data}}  \bibinfo{volume}{6} (\bibinfo{date}{03} \bibinfo{year}{2019}), \bibinfo{pages}{27}.
\newblock


\bibitem[Ke et~al\mbox{.}(2017)]%
        {NIPS2017_6449f44a}
\bibfield{author}{\bibinfo{person}{Guolin Ke}, \bibinfo{person}{Qi Meng}, \bibinfo{person}{Thomas Finley}, \bibinfo{person}{Taifeng Wang}, \bibinfo{person}{Wei Chen}, \bibinfo{person}{Weidong Ma}, \bibinfo{person}{Qiwei Ye}, {and} \bibinfo{person}{Tie-Yan Liu}.} \bibinfo{year}{2017}\natexlab{}.
\newblock \showarticletitle{LightGBM: A Highly Efficient Gradient Boosting Decision Tree}. In \bibinfo{booktitle}{\emph{Advances in Neural Information Processing Systems}}, Vol.~\bibinfo{volume}{30}. \bibinfo{publisher}{Curran Associates, Inc.}
\newblock


\bibitem[Keshvari et~al\mbox{.}(2024)]%
        {10.1145/3681784}
\bibfield{author}{\bibinfo{person}{Sanaz Keshvari}, \bibinfo{person}{Farzan Saeedi}, \bibinfo{person}{Hadi Sadoghi~Yazdi}, {and} \bibinfo{person}{Faezeh Ensan}.} \bibinfo{year}{2024}\natexlab{}.
\newblock \showarticletitle{A Self-Distilled Learning to Rank Model for Ad Hoc Retrieval}.
\newblock \bibinfo{journal}{\emph{ACM Trans. Inf. Syst.}} \bibinfo{volume}{42}, \bibinfo{number}{6}, Article \bibinfo{articleno}{166} (\bibinfo{date}{Oct.} \bibinfo{year}{2024}), \bibinfo{numpages}{28}~pages.
\newblock
\showISSN{1046-8188}


\bibitem[Kozlov(2014)]%
        {cfdrisks}
\bibfield{author}{\bibinfo{person}{Natalie Kozlov}.} \bibinfo{year}{2014}\natexlab{}.
\newblock \showarticletitle{Contracts For Difference: Risks Faced by Generators Under The New Renewables Support Scheme in The UK}.
\newblock \bibinfo{journal}{\emph{The Journal of World Energy Law \& Business}} \bibinfo{volume}{7}, \bibinfo{number}{3} (\bibinfo{date}{05} \bibinfo{year}{2014}), \bibinfo{pages}{282--286}.
\newblock
\showISSN{1754-9957}


\bibitem[Kumar et~al\mbox{.}(2020)]%
        {kumarfeatureimportance}
\bibfield{author}{\bibinfo{person}{I.~Elizabeth Kumar}, \bibinfo{person}{Suresh Venkatasubramanian}, \bibinfo{person}{Carlos Scheidegger}, {and} \bibinfo{person}{Sorelle~A. Friedler}.} \bibinfo{year}{2020}\natexlab{}.
\newblock \showarticletitle{Problems with Shapley-value-based explanations as feature importance measures}. In \bibinfo{booktitle}{\emph{Proceedings of the 37th International Conference on Machine Learning}}. \bibinfo{publisher}{JMLR.org}, Article \bibinfo{articleno}{509}.
\newblock


\bibitem[Kwon et~al\mbox{.}(2025)]%
        {KWON2025125327}
\bibfield{author}{\bibinfo{person}{Soonjae Kwon}, \bibinfo{person}{Jaeyeon Jang}, {and} \bibinfo{person}{Chang~Ouk Kim}.} \bibinfo{year}{2025}\natexlab{}.
\newblock \showarticletitle{Credit Scoring Using Multi-task Siamese Neural Network for Improving Prediction Performance and Stability}.
\newblock \bibinfo{journal}{\emph{Expert Systems with Applications}}  \bibinfo{volume}{259} (\bibinfo{year}{2025}), \bibinfo{pages}{125327}.
\newblock
\showISSN{0957-4174}


\bibitem[Lee et~al\mbox{.}(2024)]%
        {lee-etal-2024-methods}
\bibfield{author}{\bibinfo{person}{Justin Lee}, \bibinfo{person}{Gabriel Bernier-Colborne}, \bibinfo{person}{Tegan Maharaj}, {and} \bibinfo{person}{Sowmya Vajjala}.} \bibinfo{year}{2024}\natexlab{}.
\newblock \showarticletitle{Methods, Applications, and Directions of Learning-to-Rank in {NLP} Research}. In \bibinfo{booktitle}{\emph{Findings of the Association for Computational Linguistics: NAACL 2024}}. \bibinfo{publisher}{Association for Computational Linguistics}, \bibinfo{address}{Mexico City, Mexico}, \bibinfo{pages}{1900--1917}.
\newblock


\bibitem[Lei et~al\mbox{.}(2020)]%
        {xgboost-financial-fraud}
\bibfield{author}{\bibinfo{person}{Shimin Lei}, \bibinfo{person}{Ke Xu}, \bibinfo{person}{YiZhe Huang}, {and} \bibinfo{person}{Xinye Sha}.} \bibinfo{year}{2020}\natexlab{}.
\newblock \showarticletitle{An XGBoost Based System for Financial Fraud Detection}.
\newblock \bibinfo{journal}{\emph{E3S Web of Conferences}}  \bibinfo{volume}{214} (\bibinfo{date}{01} \bibinfo{year}{2020}), \bibinfo{pages}{02042}.
\newblock


\bibitem[Li et~al\mbox{.}(2021)]%
        {rank-insights}
\bibfield{author}{\bibinfo{person}{Roger~Zhe Li}, \bibinfo{person}{Juli\'{a}n Urbano}, {and} \bibinfo{person}{Alan Hanjalic}.} \bibinfo{year}{2021}\natexlab{}.
\newblock \showarticletitle{New Insights Into Metric Optimization for Ranking-based Recommendation}. \bibinfo{publisher}{Association for Computing Machinery}, \bibinfo{address}{New York, NY, USA}, \bibinfo{pages}{932–941}.
\newblock
\showISBNx{9781450380379}


\bibitem[Li et~al\mbox{.}(2025)]%
        {10.1145/3690624.3709234}
\bibfield{author}{\bibinfo{person}{Weixian~Waylon Li}, \bibinfo{person}{Yftah Ziser}, \bibinfo{person}{Yifei Xie}, \bibinfo{person}{Shay~B. Cohen}, {and} \bibinfo{person}{Tiejun Ma}.} \bibinfo{year}{2025}\natexlab{}.
\newblock \showarticletitle{TSPRank: Bridging Pairwise and Listwise Methods with a Bilinear Travelling Salesman Model}. In \bibinfo{booktitle}{\emph{Proceedings of the 31st ACM SIGKDD Conference on Knowledge Discovery and Data Mining V.1}} (Toronto ON, Canada) \emph{(\bibinfo{series}{KDD '25})}. \bibinfo{publisher}{Association for Computing Machinery}, \bibinfo{address}{New York, NY, USA}, \bibinfo{pages}{707–718}.
\newblock
\showISBNx{9798400712456}


\bibitem[Li et~al\mbox{.}(2019)]%
        {10.1145/3312528}
\bibfield{author}{\bibinfo{person}{Xinyi Li}, \bibinfo{person}{Yifan Chen}, \bibinfo{person}{Benjamin Pettit}, {and} \bibinfo{person}{Maarten~De Rijke}.} \bibinfo{year}{2019}\natexlab{}.
\newblock \showarticletitle{Personalised Reranking of Paper Recommendations Using Paper Content and User Behavior}.
\newblock \bibinfo{journal}{\emph{ACM Trans. Inf. Syst.}} \bibinfo{volume}{37}, \bibinfo{number}{3}, Article \bibinfo{articleno}{31} (\bibinfo{date}{March} \bibinfo{year}{2019}).
\newblock
\showISSN{1046-8188}


\bibitem[Liao et~al\mbox{.}(2022)]%
        {Liao_Wang_Xue_Lei_Han_Lu_2022}
\bibfield{author}{\bibinfo{person}{Jingxian Liao}, \bibinfo{person}{Wei Wang}, \bibinfo{person}{Jason Xue}, \bibinfo{person}{Anthony Lei}, \bibinfo{person}{Xue Han}, {and} \bibinfo{person}{Kun Lu}.} \bibinfo{year}{2022}\natexlab{}.
\newblock \showarticletitle{Combating Sampling Bias: A Self-Training Method in Credit Risk Models}.
\newblock \bibinfo{journal}{\emph{Proceedings of the AAAI Conference on Artificial Intelligence}} \bibinfo{volume}{36}, \bibinfo{number}{11} (\bibinfo{date}{Jun.} \bibinfo{year}{2022}), \bibinfo{pages}{12566--12572}.
\newblock


\bibitem[Liu et~al\mbox{.}(2008)]%
        {4781136}
\bibfield{author}{\bibinfo{person}{Fei~Tony Liu}, \bibinfo{person}{Kai~Ming Ting}, {and} \bibinfo{person}{Zhi-Hua Zhou}.} \bibinfo{year}{2008}\natexlab{}.
\newblock \showarticletitle{Isolation Forest}. In \bibinfo{booktitle}{\emph{2008 Eighth IEEE International Conference on Data Mining}}. \bibinfo{pages}{413--422}.
\newblock


\bibitem[Ma et~al\mbox{.}(2022)]%
        {MA2022330}
\bibfield{author}{\bibinfo{person}{Tiejun Ma}, \bibinfo{person}{Peter A.~F. Fraser-Mackenzie}, \bibinfo{person}{Ming-Chien Sung}, \bibinfo{person}{Amey~P. Kansara}, {and} \bibinfo{person}{Johnnie E.~V. Johnson}.} \bibinfo{year}{2022}\natexlab{}.
\newblock \showarticletitle{Are The Least Successful Traders Those Most Likely to Exit The Market? A Survival Analysis Contribution to The Efficient Market Debate}.
\newblock \bibinfo{journal}{\emph{European Journal of Operational Research}} \bibinfo{volume}{299}, \bibinfo{number}{1} (\bibinfo{year}{2022}), \bibinfo{pages}{330--345}.
\newblock
\showISSN{0377-2217}


\bibitem[Marcuzzi et~al\mbox{.}(2022)]%
        {sour}
\bibfield{author}{\bibinfo{person}{Federico Marcuzzi}, \bibinfo{person}{Claudio Lucchese}, {and} \bibinfo{person}{Salvatore Orlando}.} \bibinfo{year}{2022}\natexlab{}.
\newblock \showarticletitle{Filtering out Outliers in Learning to Rank}. In \bibinfo{booktitle}{\emph{Proceedings of the 2022 ACM SIGIR International Conference on Theory of Information Retrieval}} (Madrid, Spain) \emph{(\bibinfo{series}{ICTIR '22})}. \bibinfo{publisher}{Association for Computing Machinery}, \bibinfo{address}{New York, NY, USA}, \bibinfo{pages}{214–222}.
\newblock
\showISBNx{9781450394123}


\bibitem[Nardini et~al\mbox{.}(2023)]%
        {10.1109/TKDE.2022.3152585}
\bibfield{author}{\bibinfo{person}{Franco~Maria Nardini}, \bibinfo{person}{Cosimo Rulli}, \bibinfo{person}{Salvatore Trani}, {and} \bibinfo{person}{Rossano Venturini}.} \bibinfo{year}{2023}\natexlab{}.
\newblock \showarticletitle{Distilled Neural Networks for Efficient Learning to Rank}.
\newblock \bibinfo{journal}{\emph{IEEE Trans. on Knowl. and Data Eng.}} \bibinfo{volume}{35}, \bibinfo{number}{5} (\bibinfo{date}{May} \bibinfo{year}{2023}), \bibinfo{pages}{4695–4712}.
\newblock
\showISSN{1041-4347}


\bibitem[Pang et~al\mbox{.}(2023)]%
        {10.1145/3580305.3599302}
\bibfield{author}{\bibinfo{person}{Guansong Pang}, \bibinfo{person}{Chunhua Shen}, \bibinfo{person}{Huidong Jin}, {and} \bibinfo{person}{Anton van~den Hengel}.} \bibinfo{year}{2023}\natexlab{}.
\newblock \showarticletitle{Deep Weakly-Supervised Anomaly Detection}. In \bibinfo{booktitle}{\emph{Proceedings of the 29th ACM SIGKDD International Conference on Knowledge Discovery and Data Mining}}. \bibinfo{publisher}{Association for Computing Machinery}, \bibinfo{address}{New York, NY, USA}, \bibinfo{pages}{1795–1807}.
\newblock
\showISBNx{9798400701030}


\bibitem[Pang et~al\mbox{.}(2019)]%
        {pang2019deep}
\bibfield{author}{\bibinfo{person}{Guansong Pang}, \bibinfo{person}{Chunhua Shen}, {and} \bibinfo{person}{Anton van~den Hengel}.} \bibinfo{year}{2019}\natexlab{}.
\newblock \showarticletitle{Deep Anomaly Detection with Deviation Networks}. In \bibinfo{booktitle}{\emph{Proceedings of the 25th ACM SIGKDD International Conference on Knowledge Discovery and Data Mining}}. \bibinfo{pages}{353--362}.
\newblock


\bibitem[Pang et~al\mbox{.}(2020)]%
        {pang2020setrank}
\bibfield{author}{\bibinfo{person}{Liang Pang}, \bibinfo{person}{Jun Xu}, \bibinfo{person}{Qingyao Ai}, \bibinfo{person}{Yanyan Lan}, \bibinfo{person}{Xueqi Cheng}, {and} \bibinfo{person}{Jirong Wen}.} \bibinfo{year}{2020}\natexlab{}.
\newblock \showarticletitle{SetRank: Learning a Permutation-Invariant Ranking Model for Information Retrieval}. In \bibinfo{booktitle}{\emph{Proceedings of the 43rd International ACM SIGIR Conference on Research and Development in Information Retrieval}}. \bibinfo{pages}{499--508}.
\newblock


\bibitem[Pasumarthi et~al\mbox{.}(2019)]%
        {10.1145/3292500.3330677}
\bibfield{author}{\bibinfo{person}{Rama~Kumar Pasumarthi}, \bibinfo{person}{Sebastian Bruch}, \bibinfo{person}{Xuanhui Wang}, \bibinfo{person}{Cheng Li}, \bibinfo{person}{Michael Bendersky}, \bibinfo{person}{Marc Najork}, \bibinfo{person}{Jan Pfeifer}, \bibinfo{person}{Nadav Golbandi}, \bibinfo{person}{Rohan Anil}, {and} \bibinfo{person}{Stephan Wolf}.} \bibinfo{year}{2019}\natexlab{}.
\newblock \showarticletitle{TF-Ranking: Scalable TensorFlow Library for Learning-to-Rank}. In \bibinfo{booktitle}{\emph{Proceedings of the 25th ACM SIGKDD International Conference on Knowledge Discovery and Data Mining}}. \bibinfo{publisher}{Association for Computing Machinery}, \bibinfo{address}{New York, NY, USA}, \bibinfo{pages}{2970–2978}.
\newblock
\showISBNx{9781450362016}


\bibitem[Pei et~al\mbox{.}(2019)]%
        {peiPersonalizedRerankingRecommendation2019}
\bibfield{author}{\bibinfo{person}{Changhua Pei}, \bibinfo{person}{Yi Zhang}, \bibinfo{person}{Yongfeng Zhang}, \bibinfo{person}{Fei Sun}, \bibinfo{person}{Xiao Lin}, \bibinfo{person}{Hanxiao Sun}, \bibinfo{person}{Jian Wu}, \bibinfo{person}{Peng Jiang}, \bibinfo{person}{Junfeng Ge}, \bibinfo{person}{Wenwu Ou}, {and} \bibinfo{person}{Dan Pei}.} \bibinfo{year}{2019}\natexlab{}.
\newblock \showarticletitle{Personalized Re-ranking for Recommendation}. In \bibinfo{booktitle}{\emph{Proceedings of the 13th ACM Conference on Recommender Systems}}. \bibinfo{publisher}{Association for Computing Machinery}, \bibinfo{address}{New York, NY, USA}, \bibinfo{pages}{3–11}.
\newblock
\showISBNx{9781450362436}


\bibitem[Poh et~al\mbox{.}(2021)]%
        {poh2021building}
\bibfield{author}{\bibinfo{person}{Daniel Poh}, \bibinfo{person}{Bryan Lim}, \bibinfo{person}{Stefan Zohren}, {and} \bibinfo{person}{Stephen Roberts}.} \bibinfo{year}{2021}\natexlab{}.
\newblock \showarticletitle{Building Cross-Sectional Systematic Strategies by Learning to Rank}.
\newblock \bibinfo{journal}{\emph{The Journal of Financial Data Science}} \bibinfo{volume}{3}, \bibinfo{number}{2} (\bibinfo{year}{2021}), \bibinfo{pages}{70--86}.
\newblock


\bibitem[Poon et~al\mbox{.}(2003)]%
        {10.1093/rfs/hhg058extremevaluedependence}
\bibfield{author}{\bibinfo{person}{Ser-Huang Poon}, \bibinfo{person}{Michael Rockinger}, {and} \bibinfo{person}{Jonathan Tawn}.} \bibinfo{year}{2003}\natexlab{}.
\newblock \showarticletitle{Extreme Value Dependence in Financial Markets: Diagnostics, Models, and Financial Implications}.
\newblock \bibinfo{journal}{\emph{The Review of Financial Studies}} \bibinfo{volume}{17}, \bibinfo{number}{2} (\bibinfo{date}{Oct.} \bibinfo{year}{2003}), \bibinfo{pages}{581--610}.
\newblock
\showISSN{0893-9454}


\bibitem[Qin et~al\mbox{.}(2021)]%
        {google-neural-outperform-gbdt}
\bibfield{author}{\bibinfo{person}{Zhen Qin}, \bibinfo{person}{Le Yan}, \bibinfo{person}{Honglei Zhuang}, \bibinfo{person}{Yi Tay}, \bibinfo{person}{Rama~Kumar Pasumarthi}, \bibinfo{person}{Xuanhui Wang}, \bibinfo{person}{Mike Bendersky}, {and} \bibinfo{person}{Marc Najork}.} \bibinfo{year}{2021}\natexlab{}.
\newblock \showarticletitle{Are Neural Rankers Still Outperformed by Gradient Boosted Decision Trees?}. In \bibinfo{booktitle}{\emph{International Conference on Learning Representations (ICLR)}}.
\newblock


\bibitem[Reimers and Gurevych(2019)]%
        {reimers2019sentence}
\bibfield{author}{\bibinfo{person}{Nils Reimers} {and} \bibinfo{person}{Iryna Gurevych}.} \bibinfo{year}{2019}\natexlab{}.
\newblock \showarticletitle{Sentence-{BERT}: Sentence Embeddings using {S}iamese {BERT}-Networks}. In \bibinfo{booktitle}{\emph{Proceedings of the 2019 Conference on Empirical Methods in Natural Language Processing and the 9th International Joint Conference on Natural Language Processing (EMNLP-IJCNLP)}}. \bibinfo{publisher}{Association for Computational Linguistics}, \bibinfo{address}{Hong Kong, China}, \bibinfo{pages}{3982--3992}.
\newblock


\bibitem[Rome et~al\mbox{.}(2022)]%
        {10.1145/3477495.3536334}
\bibfield{author}{\bibinfo{person}{Scott Rome}, \bibinfo{person}{Sardar Hamidian}, \bibinfo{person}{Richard Walsh}, \bibinfo{person}{Kevin Foley}, {and} \bibinfo{person}{Ferhan Ture}.} \bibinfo{year}{2022}\natexlab{}.
\newblock \showarticletitle{Learning to Rank Instant Search Results with Multiple Indices: A Case Study in Search Aggregation for Entertainment}. In \bibinfo{booktitle}{\emph{Proceedings of the 45th International ACM SIGIR Conference on Research and Development in Information Retrieval}}. \bibinfo{publisher}{Association for Computing Machinery}, \bibinfo{address}{New York, NY, USA}, \bibinfo{pages}{3412–3416}.
\newblock
\showISBNx{9781450387323}


\bibitem[Ruff et~al\mbox{.}(2020)]%
        {ruff2020deep}
\bibfield{author}{\bibinfo{person}{Lukas Ruff}, \bibinfo{person}{Robert~A. Vandermeulen}, \bibinfo{person}{Nico G{\"o}rnitz}, \bibinfo{person}{Alexander Binder}, \bibinfo{person}{Emmanuel M{\"u}ller}, \bibinfo{person}{Klaus-Robert M{\"u}ller}, {and} \bibinfo{person}{Marius Kloft}.} \bibinfo{year}{2020}\natexlab{}.
\newblock \showarticletitle{Deep Semi-Supervised Anomaly Detection}. In \bibinfo{booktitle}{\emph{International Conference on Learning Representations}}.
\newblock


\bibitem[Scarselli et~al\mbox{.}(2009)]%
        {gnn-paper}
\bibfield{author}{\bibinfo{person}{Franco Scarselli}, \bibinfo{person}{Marco Gori}, \bibinfo{person}{Ah~Chung Tsoi}, \bibinfo{person}{Markus Hagenbuchner}, {and} \bibinfo{person}{Gabriele Monfardini}.} \bibinfo{year}{2009}\natexlab{}.
\newblock \showarticletitle{The Graph Neural Network Model}.
\newblock \bibinfo{journal}{\emph{IEEE Transactions on Neural Networks}} \bibinfo{volume}{20}, \bibinfo{number}{1} (\bibinfo{year}{2009}), \bibinfo{pages}{61--80}.
\newblock


\bibitem[Securities and Authority(2013)]%
        {CFDs}
\bibfield{author}{\bibinfo{person}{European Securities} {and} \bibinfo{person}{Markets Authority}.} \bibinfo{year}{2013}\natexlab{}.
\newblock \showarticletitle{Contracts for difference (CFDs)}.
\newblock


\bibitem[Severyn and Moschitti(2015)]%
        {severyn2015learning}
\bibfield{author}{\bibinfo{person}{Aliaksei Severyn} {and} \bibinfo{person}{Alessandro Moschitti}.} \bibinfo{year}{2015}\natexlab{}.
\newblock \showarticletitle{Learning to Rank Short Text Pairs with Convolutional Deep Neural Networks}. In \bibinfo{booktitle}{\emph{Proceedings of the 38th International ACM SIGIR Conference on Research and Development in Information Retrieval}}. \bibinfo{pages}{373--382}.
\newblock


\bibitem[Shashua and Levin(2002)]%
        {Shashua02taxonomyof}
\bibfield{author}{\bibinfo{person}{Amnon Shashua} {and} \bibinfo{person}{Anat Levin}.} \bibinfo{year}{2002}\natexlab{}.
\newblock \showarticletitle{Taxonomy of Large Margin Principle Algorithms for Ordinal Regression Problems}, In \bibinfo{booktitle}{Advances in Neural Information Processing Systems 15}.
\newblock \bibinfo{journal}{\emph{Advances in neural information processing systems}}  \bibinfo{volume}{15}, \bibinfo{pages}{937--944}.
\newblock


\bibitem[Song et~al\mbox{.}(2017)]%
        {song2017stock}
\bibfield{author}{\bibinfo{person}{Qiang Song}, \bibinfo{person}{Anqi Liu}, {and} \bibinfo{person}{Steve~Y Yang}.} \bibinfo{year}{2017}\natexlab{}.
\newblock \showarticletitle{Stock Portfolio Selection Using Learning-to-Rank Algorithms with News Sentiment}.
\newblock \bibinfo{journal}{\emph{Neurocomputing}}  \bibinfo{volume}{264} (\bibinfo{year}{2017}), \bibinfo{pages}{20--28}.
\newblock


\bibitem[Tian et~al\mbox{.}(2025)]%
        {10.1145/3696410.3714576}
\bibfield{author}{\bibinfo{person}{Zhihua Tian}, \bibinfo{person}{Yuan Ding}, \bibinfo{person}{Xiang Yu}, \bibinfo{person}{Enchao Gong}, \bibinfo{person}{Jian Liu}, {and} \bibinfo{person}{Kui Ren}.} \bibinfo{year}{2025}\natexlab{}.
\newblock \showarticletitle{Towards Collaborative Anti-Money Laundering Among Financial Institutions}. In \bibinfo{booktitle}{\emph{Proceedings of the ACM on Web Conference 2025}} (Sydney NSW, Australia) \emph{(\bibinfo{series}{WWW '25})}. \bibinfo{publisher}{Association for Computing Machinery}, \bibinfo{address}{New York, NY, USA}, \bibinfo{pages}{4722–4733}.
\newblock
\showISBNx{9798400712746}


\bibitem[Tokovarov and Karczmarek(2022)]%
        {TOKOVAROV2022433}
\bibfield{author}{\bibinfo{person}{Mikhail Tokovarov} {and} \bibinfo{person}{Paweł Karczmarek}.} \bibinfo{year}{2022}\natexlab{}.
\newblock \showarticletitle{A Probabilistic Generalization of Isolation Forest}.
\newblock \bibinfo{journal}{\emph{Information Sciences}}  \bibinfo{volume}{584} (\bibinfo{year}{2022}), \bibinfo{pages}{433--449}.
\newblock
\showISSN{0020-0255}


\bibitem[Tsai et~al\mbox{.}(2007)]%
        {tsai2007frank}
\bibfield{author}{\bibinfo{person}{Ming-Feng Tsai}, \bibinfo{person}{Tie-Yan Liu}, \bibinfo{person}{Tao Qin}, \bibinfo{person}{Hsin-Hsi Chen}, {and} \bibinfo{person}{Wei-Ying Ma}.} \bibinfo{year}{2007}\natexlab{}.
\newblock \showarticletitle{Frank: a ranking method with fidelity loss}. In \bibinfo{booktitle}{\emph{Proceedings of the 30th Annual International ACM SIGIR Conference on Research and Development in Information Retrieval}}. \bibinfo{pages}{383--390}.
\newblock


\bibitem[Vaswani et~al\mbox{.}(2017)]%
        {attention}
\bibfield{author}{\bibinfo{person}{Ashish Vaswani}, \bibinfo{person}{Noam Shazeer}, \bibinfo{person}{Niki Parmar}, \bibinfo{person}{Jakob Uszkoreit}, \bibinfo{person}{Llion Jones}, \bibinfo{person}{Aidan~N Gomez}, \bibinfo{person}{\L~ukasz Kaiser}, {and} \bibinfo{person}{Illia Polosukhin}.} \bibinfo{year}{2017}\natexlab{}.
\newblock \showarticletitle{Attention Is All You Need}. In \bibinfo{booktitle}{\emph{Advances in Neural Information Processing Systems}}, Vol.~\bibinfo{volume}{30}. \bibinfo{publisher}{Curran Associates, Inc.}
\newblock


\bibitem[Wang et~al\mbox{.}(2020)]%
        {wang-etal-2020-pretrain}
\bibfield{author}{\bibinfo{person}{Sinong Wang}, \bibinfo{person}{Madian Khabsa}, {and} \bibinfo{person}{Hao Ma}.} \bibinfo{year}{2020}\natexlab{}.
\newblock \showarticletitle{To Pretrain or Not to Pretrain: Examining the Benefits of Pretrainng on Resource Rich Tasks}. In \bibinfo{booktitle}{\emph{Proceedings of the 58th Annual Meeting of the Association for Computational Linguistics}}. \bibinfo{publisher}{Association for Computational Linguistics}, \bibinfo{address}{Online}, \bibinfo{pages}{2209--2213}.
\newblock


\bibitem[Wang et~al\mbox{.}(2018)]%
        {lambdaloss}
\bibfield{author}{\bibinfo{person}{Xuanhui Wang}, \bibinfo{person}{Cheng Li}, \bibinfo{person}{Nadav Golbandi}, \bibinfo{person}{Michael Bendersky}, {and} \bibinfo{person}{Marc Najork}.} \bibinfo{year}{2018}\natexlab{}.
\newblock \showarticletitle{The LambdaLoss Framework for Ranking Metric Optimization}. In \bibinfo{booktitle}{\emph{Proceedings of the 27th ACM International Conference on Information and Knowledge Management}} (Torino, Italy) \emph{(\bibinfo{series}{CIKM '18})}. \bibinfo{publisher}{Association for Computing Machinery}, \bibinfo{address}{New York, NY, USA}, \bibinfo{pages}{1313–1322}.
\newblock
\showISBNx{9781450360142}


\bibitem[Xu et~al\mbox{.}(2023a)]%
        {xu2023deep}
\bibfield{author}{\bibinfo{person}{Hongzuo Xu}, \bibinfo{person}{Guansong Pang}, \bibinfo{person}{Yijie Wang}, {and} \bibinfo{person}{Yongjun Wang}.} \bibinfo{year}{2023}\natexlab{a}.
\newblock \showarticletitle{Deep Isolation Forest for Anomaly Detection}.
\newblock \bibinfo{journal}{\emph{IEEE Transactions on Knowledge and Data Engineering}} (\bibinfo{year}{2023}), \bibinfo{pages}{1--14}.
\newblock


\bibitem[Xu et~al\mbox{.}(2023b)]%
        {10.5555/3618408.3620019}
\bibfield{author}{\bibinfo{person}{Hongzuo Xu}, \bibinfo{person}{Yijie Wang}, \bibinfo{person}{Juhui Wei}, \bibinfo{person}{Songlei Jian}, \bibinfo{person}{Yizhou Li}, {and} \bibinfo{person}{Ning Liu}.} \bibinfo{year}{2023}\natexlab{b}.
\newblock \showarticletitle{Fascinating Supervisory Signals and Where to Find Them: Deep Anomaly Detection with Scale Learning}. In \bibinfo{booktitle}{\emph{Proceedings of the 40th International Conference on Machine Learning}}. \bibinfo{publisher}{JMLR.org}, Article \bibinfo{articleno}{1611}.
\newblock


\bibitem[Yang et~al\mbox{.}(2016)]%
        {yang2016beyond}
\bibfield{author}{\bibinfo{person}{Liu Yang}, \bibinfo{person}{Qingyao Ai}, \bibinfo{person}{Damiano Spina}, \bibinfo{person}{Ruey-Cheng Chen}, \bibinfo{person}{Liang Pang}, \bibinfo{person}{W~Bruce Croft}, \bibinfo{person}{Jiafeng Guo}, {and} \bibinfo{person}{Falk Scholer}.} \bibinfo{year}{2016}\natexlab{}.
\newblock \showarticletitle{Beyond Factoid QA: Effective Methods for Non-factoid Answer Sentence Retrieval}. In \bibinfo{booktitle}{\emph{European Conference on Information Retrieval}}. Springer, \bibinfo{pages}{115--128}.
\newblock


\bibitem[Yang et~al\mbox{.}(2022)]%
        {10.1145/3477495.3531948}
\bibfield{author}{\bibinfo{person}{Tao Yang}, \bibinfo{person}{Chen Luo}, \bibinfo{person}{Hanqing Lu}, \bibinfo{person}{Parth Gupta}, \bibinfo{person}{Bing Yin}, {and} \bibinfo{person}{Qingyao Ai}.} \bibinfo{year}{2022}\natexlab{}.
\newblock \showarticletitle{Can Clicks Be Both Labels and Features? Unbiased Behavior Feature Collection and Uncertainty-aware Learning to Rank}. In \bibinfo{booktitle}{\emph{Proceedings of the 45th International ACM SIGIR Conference on Research and Development in Information Retrieval}}. \bibinfo{publisher}{Association for Computing Machinery}, \bibinfo{address}{New York, NY, USA}, \bibinfo{pages}{6–17}.
\newblock
\showISBNx{9781450387323}


\bibitem[Yang et~al\mbox{.}(2020)]%
        {dnn}
\bibfield{author}{\bibinfo{person}{Yaodong Yang}, \bibinfo{person}{Alisa Kolesnikova}, \bibinfo{person}{Stefan Lessmann}, \bibinfo{person}{Tiejun Ma}, \bibinfo{person}{Ming-Chien Sung}, {and} \bibinfo{person}{Johnnie E.~V. Johnson}.} \bibinfo{year}{2020}\natexlab{}.
\newblock \showarticletitle{Can Deep Learning Predict Risky Retail Investors? A Case Study in Financial Risk Behavior Forecasting}.
\newblock \bibinfo{journal}{\emph{European Journal of Operational Research}} \bibinfo{volume}{283}, \bibinfo{number}{1} (\bibinfo{year}{2020}), \bibinfo{pages}{217--234}.
\newblock
\showISSN{0377-2217}


\bibitem[Zhang et~al\mbox{.}(2021)]%
        {10.1145/3459637.3482433}
\bibfield{author}{\bibinfo{person}{Chuang Zhang}, \bibinfo{person}{Qizhou Wang}, \bibinfo{person}{Tengfei Liu}, \bibinfo{person}{Xun Lu}, \bibinfo{person}{Jin Hong}, \bibinfo{person}{Bo Han}, {and} \bibinfo{person}{Chen Gong}.} \bibinfo{year}{2021}\natexlab{}.
\newblock \showarticletitle{Fraud Detection under Multi-Sourced Extremely Noisy Annotations}. In \bibinfo{booktitle}{\emph{Proceedings of the 30th ACM International Conference on Information \& Knowledge Management}}. \bibinfo{publisher}{Association for Computing Machinery}, \bibinfo{address}{New York, NY, USA}, \bibinfo{pages}{2497–2506}.
\newblock
\showISBNx{9781450384469}


\bibitem[Zhang et~al\mbox{.}(2023)]%
        {ZHANG2023104045}
\bibfield{author}{\bibinfo{person}{Tianjiao Zhang}, \bibinfo{person}{Weidong Zhu}, \bibinfo{person}{Yong Wu}, \bibinfo{person}{Zihao Wu}, \bibinfo{person}{Chao Zhang}, {and} \bibinfo{person}{Xue Hu}.} \bibinfo{year}{2023}\natexlab{}.
\newblock \showarticletitle{An Explainable Financial Risk Early Warning Model Based on the DS-XGBoost model}.
\newblock \bibinfo{journal}{\emph{Finance Research Letters}}  \bibinfo{volume}{56} (\bibinfo{year}{2023}), \bibinfo{pages}{104045}.
\newblock
\showISSN{1544-6123}


\bibitem[Zhang et~al\mbox{.}(2022)]%
        {zhang2022constructing}
\bibfield{author}{\bibinfo{person}{Xin Zhang}, \bibinfo{person}{Lan Wu}, {and} \bibinfo{person}{Zhixue Chen}.} \bibinfo{year}{2022}\natexlab{}.
\newblock \showarticletitle{Constructing Long-Short Stock Portfolio with A New Listwise Learn-to-Rank Algorithm}.
\newblock \bibinfo{journal}{\emph{Quantitative Finance}} \bibinfo{volume}{22}, \bibinfo{number}{2} (\bibinfo{year}{2022}), \bibinfo{pages}{321--331}.
\newblock


\bibitem[Zhou et~al\mbox{.}(2021)]%
        {Zhou2021FeatureEW}
\bibfield{author}{\bibinfo{person}{Yingjie Zhou}, \bibinfo{person}{Xuchen Song}, \bibinfo{person}{Yanru Zhang}, \bibinfo{person}{Fanxing Liu}, \bibinfo{person}{Ce Zhu}, {and} \bibinfo{person}{Lingqiao Liu}.} \bibinfo{year}{2021}\natexlab{}.
\newblock \showarticletitle{Feature Encoding with Autoencoders for Weakly Supervised Anomaly Detection}.
\newblock \bibinfo{journal}{\emph{IEEE Transactions on Neural Networks and Learning Systems}}  \bibinfo{volume}{33} (\bibinfo{year}{2021}), \bibinfo{pages}{2454--2465}.
\newblock


\bibitem[Ögüt et~al\mbox{.}(2009)]%
        {svm-ann}
\bibfield{author}{\bibinfo{person}{Hulisi Ögüt}, \bibinfo{person}{Ramazan Aktas}, \bibinfo{person}{Ali Alp}, {and} \bibinfo{person}{M. Doganay}.} \bibinfo{year}{2009}\natexlab{}.
\newblock \showarticletitle{Prediction of Financial Information Manipulation by Using Support Vector Machine and Probabilistic Neural Network}.
\newblock \bibinfo{journal}{\emph{Expert Syst. Appl.}}  \bibinfo{volume}{36} (\bibinfo{date}{04} \bibinfo{year}{2009}), \bibinfo{pages}{5419--5423}.
\newblock


\end{thebibliography}

\appendix

\section{Proofs of Propositions}

\begin{theorem}
\label{proposition:balance}
Pairwise ranking approaches inherently produce a balanced class distribution by creating equal numbers of positive and negative labels for each pairwise comparison when each trader is associated with distinct future profits.
\end{theorem}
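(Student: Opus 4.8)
The plan is to reduce the assertion to an elementary counting argument over the pairwise target labels encoded by the matrix $\mathbf{T}^{(g)}$ in Equation~\eqref{eq:target_matrix}. I would fix a ranking group of $n$ traders and invoke the hypothesis that their future profits $p_1,\dots,p_n$ are pairwise distinct. This distinctness is the load-bearing assumption: it guarantees that the relation ``trader $i$ realises strictly greater profit than trader $j$'' is a strict total order on the group, so that every unordered pair $\{i,j\}$ with $i\neq j$ is comparable and exactly one of the two traders strictly dominates the other, with no ties.

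Next I would make the pairwise labelling explicit and tie it to the target matrix. For an ordered comparison $(i,j)$ the pairwise label is positive, $\mathbf{T}^{(g)}(i,j)=1$, precisely when trader $i$ outranks trader $j$, and negative, $\mathbf{T}^{(g)}(i,j)=0$, otherwise. After sorting the traders by decreasing profit, this is exactly the strictly upper-triangular $0/1$ matrix of Equation~\eqref{eq:target_matrix}. Antisymmetry of the strict order then yields the key local identity $\mathbf{T}^{(g)}(i,j)+\mathbf{T}^{(g)}(j,i)=1$ for every $i\neq j$: of the two ordered comparisons associated with an unordered pair, exactly one is labelled positive and the other negative.

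The global balance then follows by summation. There are $\binom{n}{2}$ unordered pairs, and by the local identity each contributes exactly one positive and one negative off-diagonal entry; hence the number of positive pairwise labels equals the number of negative ones, both equal to $\binom{n}{2}$. Crucially, this $1{:}1$ ratio is a pure consequence of the combinatorics of ordered comparisons and is \emph{independent} of the fraction of traders carrying the original risky binary label. This is precisely the sense in which the pairwise reformulation ``inherently'' balances the classes, regardless of the severe imbalance described in CH1.

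The argument involves no genuine analytic difficulty, so the main obstacle is conceptual rather than technical: one must verify that it is the distinctness hypothesis that rules out degeneracy. A tie $p_i=p_j$ would leave the pair incomparable under the strict order and break the local identity (forcing $\mathbf{T}^{(g)}(i,j)=\mathbf{T}^{(g)}(j,i)$), so the exact $1{:}1$ balance would fail. I would therefore state explicitly that under distinct profits no such degenerate case arises, and optionally remark that in the presence of ties the conclusion degrades gracefully to an approximate balance whose deviation is controlled by the number of tied pairs.
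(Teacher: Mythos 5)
Your proposal is correct and follows essentially the same argument as the paper: both rest on the observation that, under distinct profits, the two ordered comparisons $y_{ij}$ and $y_{ji}$ associated with each unordered pair are complementary, so positive and negative pairwise labels are generated in exact $1{:}1$ correspondence. Your additions (the explicit $\binom{n}{2}$ count, the framing via $\mathbf{T}^{(g)}$, and the remark on how ties would break the local identity) are sound refinements of the same idea rather than a different route.
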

\renewcommand\qedsymbol{$\blacksquare$}
\begin{proof}
Let $D = \{ (x_i, p_i) \mid i = 1, \ldots, N \}$ be a dataset containing $N$ instances, where each instance $x_i$ has a corresponding future profit $p_i$, representing its risk. 
Given that for any two distinct instances $x_i$ and $x_j$ in $D$, we have $p_i \neq p_j$, meaning there are no instances with identical future profits. 
In pairwise ranking, for any two instances $(x_i, p_i)$ and $(x_j, p_j)$, we define a pairwise comparison label $y_{ij}$ as follows:

\[
y_{ij} = 
\begin{cases} 
1 & \text{if } p_i > p_j, \\
0 & \text{if } p_i < p_j.
\end{cases}
\]

We aim to show that this formulation inherently produces an equal number of positive and negative labels.

For each distinct pair $(i, j)$ where $i \neq j$, we generate two pairwise comparisons:
    \begin{itemize}
        \item $y_{ij}$, which is 1 if $p_i > p_j$.
        \item $y_{ji}$, which is 1 if $p_j > p_i$. This is equivalent to $p_{ji} = 1 - p_{ij}$.
    \end{itemize}
Therefore, for every positive label $y_{ij} = 1$, there exists a corresponding negative label $y_{ji} = 0$, ensuring a balance between positive and negative labels. 
\end{proof}

\begin{theorem}
\label{proposition:upper_triangular_equivalence}
Optimising the loss function \(\mathcal{L}_{\text{PA-BCE}}\) using only the upper triangular parts of the matrices \(\mathbf{G}_{\text{score}}\), \(\mathbf{G}_{P\&L}\), and \(\mathbf{T}\) is equivalent to optimising the loss over the entire matrices. 
Formally,
\begin{equation}
\label{eq:equivalent_loss}
    \mathcal{L}_{\text{PA-BCE}} = 2 \sum_{i<j} \mathbf{G}_{P\&L}(i,j) \cdot \text{BCE}\left( \sigma(s_i - s_j),\ \mathbf{T}(i,j) \right),
\end{equation}
where \(\sigma\) is the sigmoid function and \(\text{BCE}(\cdot, \cdot)\) is the binary cross-entropy loss.
\end{theorem}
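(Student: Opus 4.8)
The plan is to establish the identity by a direct decomposition of the double sum into its diagonal, strictly-upper-triangular, and strictly-lower-triangular parts, and then to show that the lower part reproduces the upper part exactly via a symmetry-and-reindexing argument. Since the outer sum over groups $g$ in Equation~\eqref{eq:loss} is a fixed finite sum whose summands are all handled identically, I would fix a single group and suppress the superscript throughout, restoring the sum over $g$ only at the very end.

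First I would split the index set of $\sum_{i,j}$ into the three blocks $\{i=j\}$, $\{i<j\}$, and $\{i>j\}$. The diagonal contribution vanishes because $\mathbf{G}_{P\&L}(i,i)=0$ by construction; here it is worth noting that the accompanying factor $\text{BCE}(\sigma(0),0)=-\log\tfrac{1}{2}$ is finite, so the product is genuinely zero and no $0\cdot\infty$ ambiguity arises. For the strictly-upper block I would use $\mathbf{T}(i,j)=1$ to write each summand as $\mathbf{G}_{P\&L}(i,j)\bigl(-\log\sigma(s_i-s_j)\bigr)$, and for the strictly-lower block I would use $\mathbf{T}(i,j)=0$ to write each summand as $\mathbf{G}_{P\&L}(i,j)\bigl(-\log(1-\sigma(s_i-s_j))\bigr)$.

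The crux of the argument, and the step that demands care rather than routine algebra, is showing these two blocks are equal. Here I would relabel the lower block by the transposition $i\leftrightarrow j$ and invoke three facts simultaneously: the symmetry $\mathbf{G}_{P\&L}(j,i)=\mathbf{G}_{P\&L}(i,j)$ evident from Equation~\eqref{eq:pnl_gap}; the sigmoid complementarity $1-\sigma(x)=\sigma(-x)$, which turns $-\log(1-\sigma(s_j-s_i))$ into $-\log\sigma(s_i-s_j)$; and the fact that the relabelled index pair now satisfies $i<j$, so that the target entry flips from $0$ to $1$. The delicate point is that $\mathbf{G}_{\text{score}}$ is itself \emph{not} symmetric: its reflected entries are complementary ($\sigma(s_i-s_j)+\sigma(s_j-s_i)=1$) rather than equal, so the equality of the two blocks rests precisely on the flip in the target matrix $\mathbf{T}$ and the flip in the sigmoid argument cancelling one another. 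Once this is verified, the lower block is term-by-term identical to the upper block, yielding $\mathcal{L}_{\text{PA-BCE}}=2\sum_{i<j}\mathbf{G}_{P\&L}(i,j)\,\text{BCE}(\sigma(s_i-s_j),\mathbf{T}(i,j))$; restoring the outer sum over $g$ then gives the stated result.
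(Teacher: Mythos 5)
Your proof is correct and follows essentially the same route as the paper's: both fold the strictly-lower-triangular terms onto the strictly-upper-triangular ones using the symmetry of \(\mathbf{G}_{P\&L}\), the sigmoid complementarity \(\sigma(s_j-s_i)=1-\sigma(s_i-s_j)\), and the target flip \(\mathbf{T}(j,i)=1-\mathbf{T}(i,j)\), which together yield the factor of \(2\). The only cosmetic differences are that you substitute the explicit entries of \(\mathbf{T}\) (reducing each BCE term to a single logarithm) where the paper keeps the targets symbolic and expands the BCE definition, and that you treat the diagonal terms explicitly, a point the paper handles by restricting the sum to \(i \neq j\).
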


\begin{proof}
The total loss over all pairs \((i, j)\) with \(i \neq j\) is defined as:
\begin{equation}
    \mathcal{L}_{\text{PA-BCE}} = \sum_{i \neq j} \ell_{ij} = \sum_{i \neq j} \mathbf{G}_{P\&L}(i,j) \cdot \text{BCE}\left( \sigma(s_i - s_j),\ \mathbf{T}(i,j) \right).
\end{equation}

Since the P\&L gap matrix \(\mathbf{G}_{P\&L}\) is symmetric, i.e., \(\mathbf{G}_{P\&L}(i,j) = \mathbf{G}_{P\&L}(j,i)\), and the target matrix satisfies \(\mathbf{T}(i,j) = 1 - \mathbf{T}(j,i)\) for all \(i \neq j\), we can group the loss terms for each pair \((i, j)\) as follows:
\begin{equation}
    \ell_{ij} + \ell_{ji} = \mathbf{G}_{P\&L}(i,j) \cdot \left[ \text{BCE}\left( \sigma(s_i - s_j),\ \mathbf{T}(i,j) \right) + \text{BCE}\left( \sigma(s_j - s_i),\ \mathbf{T}(j,i) \right) \right].
\end{equation}

Utilizing the property \(\sigma(s_j - s_i) = 1 - \sigma(s_i - s_j)\) and \(\mathbf{T}(j,i) = 1 - \mathbf{T}(i,j)\), we can express the combined loss as:
\begin{equation}
\label{eq:combined_loss}
    \ell_{ij} + \ell_{ji} = \mathbf{G}_{P\&L}(i,j) \cdot \left[ \text{BCE}\left( \sigma(s_i - s_j),\ \mathbf{T}(i,j) \right) + \text{BCE}\left( 1 - \sigma(s_i - s_j),\ 1 - \mathbf{T}(i,j) \right) \right].
\end{equation}

Recall that the binary cross-entropy loss is defined as:
\begin{equation}
    \text{BCE}(p, t) = -[t \log p + (1 - t) \log (1 - p)].
\end{equation}

Substituting this into Equation~\eqref{eq:combined_loss}, we have:
\begin{align}
    \ell_{ij} + \ell_{ji} &= \mathbf{G}_{P\&L}(i,j) \cdot \Big( -\left[ \mathbf{T}(i,j) \log \sigma(s_i - s_j) + (1 - \mathbf{T}(i,j)) \log (1 - \sigma(s_i - s_j)) \right] \nonumber \\
    &\quad\quad\quad\quad\quad\quad\ - \left[ (1 - \mathbf{T}(i,j)) \log (1 - \sigma(s_i - s_j)) + \mathbf{T}(i,j) \log \sigma(s_i - s_j) \right] \Big).
\end{align}

Simplifying the expression inside the brackets:
\begin{align}
    \ell_{ij} + \ell_{ji} &= \mathbf{G}_{P\&L}(i,j) \cdot \Big( -2 \left[ \mathbf{T}(i,j) \log \sigma(s_i - s_j) + (1 - \mathbf{T}(i,j)) \log (1 - \sigma(s_i - s_j)) \right] \Big) \nonumber \\
    &= 2 \mathbf{G}_{P\&L}(i,j) \cdot \text{BCE}\left( \sigma(s_i - s_j),\ \mathbf{T}(i,j) \right).
\end{align}

Therefore, the total loss becomes:
\begin{equation}
    \mathcal{L}_{\text{PA-BCE}} = \sum_{i \neq j} \ell_{ij} = \sum_{i<j} \left( \ell_{ij} + \ell_{ji} \right) = 2 \sum_{i<j} \mathbf{G}_{P\&L}(i,j) \cdot \text{BCE}\left( \sigma(s_i - s_j),\ \mathbf{T}(i,j) \right).
\end{equation}

Since the factor of 2 is constant across all terms, it does not affect the optima.
Thus, minimising \(\mathcal{L}_{\text{PA-BCE}}\) over all pairs is equivalent to minimising it over the upper triangular part of the matrices, scaled by a constant factor. 
This proves that optimising using only the upper triangular parts is sufficient and equivalent.
\end{proof}

\section{The Necessity For Novel Weighted Objective}
\label{sec:preliminary}
In the preliminary experiment, we demonstrate that simply using weighted objective cannot effectively solve the data imbalance issue and highlight the necessary need of novel objective design.

The setup is straightforward: assigning greater weights to positive (risky trader) instances during the training. This method is applied to the prevalent ranking loss function for binary labels in recent LETOR methodologies, \textit{Binary Cross Entropy} (BCE) \cite{pang2020setrank, rankformer}. The modified weighted BCE loss function (W-BCE) is formulated in Equation~\eqref{eq:wbce-loss}.

\begin{equation}
\label{eq:wbce-loss}
\mathcal{L}_\text{W-BCE}(\mathbf{s}, \mathbf{t}) = \sum - \mathbf{w} \cdot [ \mathbf{t} \log \sigma(\mathbf{s}) + (1-\mathbf{t}) \log (1-\sigma (\mathbf{s})) ]
\end{equation}

Here, $\mathbf{s}$ is the score vector for traders, $\mathbf{t}$ is the target status (1 for risky, 0 for non-risky), $\sigma(\mathbf{s})$ is the sigmoid function, and $\mathbf{w}$ is the weight vector emphasising risky traders.




\begin{wraptable}{R}{0.6\textwidth}
\centering
\resizebox{0.6\textwidth}{!}{%
\begin{tabular}{lllrrrr}
\toprule
Loss & Train & Test & NDCG@3 & NDCG@5 & NDCG@10\\ \midrule
LogSoftmax & 50 & 100 & 0.214 & 0.253 & 0.311 \\
BCE & 50 & 100 & 0.142 & 0.168 & 0.217  \\
W-BCE & 50 & 100 & 0.141 & 0.165 & 0.202 \\
$\lambda$MART & 50 & 100 & \textbf{0.420} & \textbf{0.450} & \textbf{0.496} \\
LogSoftmax & 100 & 100 & 0.218 & 0.258 & 0.307 \\
BCE & 100 & 100 & 0.134 & 0.157 & 0.201  \\
W-BCE & 100 & 100 & 0.127 & 0.148 & 0.183 \\
$\lambda$MART & 100 & 100 & \textbf{0.420} & \textbf{0.450} & \textbf{0.496}  \\
\bottomrule
\end{tabular}
}
\caption{Preliminary results on weighted \textit{LogSoftmax} and \textit{BCE}.}
\label{tab:preli_results}
\vspace{-0.5cm}
\end{wraptable}

The evaluations here are conducted to compare the weighted BCE loss function  against the original BCE loss and \textit{LogSoftmax} which is another widely-used loss function.
The weights vector $\mathbf{w}$ is assigned a value of 2 for instances of risky traders. Further tests with weights of 5 and 10 were conducted, yielding consistent performance across these variations.

The backbone ranking algorithm we used is Rankformer \cite{rankformer} with original hyperparameters setup. We train the Rankformer model with different loss functions using group size of 50, 100 on training set and 100 on test set so that we ensure every group contains a risky trader during testing.
We also implement the LambdaMART model with $10^4$ trees to establish a strong baseline, acknowledging existing research that demonstrates transformer-based ranking models, such as Rankformer and SetRank, often struggle to exceed the performance benchmarks set by LambdaMART in ranking tasks \cite{pang2020setrank,rankformer}. 
This choice is informed by the current consensus in the field regarding the efficacy of GBDT based methods \cite{xgboost,NIPS2017_6449f44a,NEURIPS2022_0378c769}.

Based on the results presented in Table~\ref{tab:preli_results}, it is evident that the naive weighted loss functions do not outperform the original method.
Specifically, the implementation of W-BCE leads to a marginal decline in ranking performance, falling below our expectations. Conversely, the LambdaMART model significantly outperforms Rankformer with all the benchmark objectives across all metrics, thereby underscoring the superiority of GBDT-based ranking models. These findings highlight the need for further refinement in our approach and validate the exploration of our novel PA-BCE objective.

\section{Hyper-parameters of Benchmark Models}
\label{appendix:hyper-parameters}

Table~\ref{tab:hyper-parameters} outlines the hyper-parameters for benchmark models, encompassing anomaly detection, classification, and ranking methods.

\begin{table}[htbp]
\centering
\resizebox{\textwidth}{!}{%
\begin{tabular}{lll}
\toprule
Model & Type & Hyper-Parameter \\ \midrule
DeepSAD & Anomaly Detection & epochs: 200, batch size: 64, learning rate: 0.001, hidden states: {[}100, 50{]} \\
SLAD & Anomaly Detection & epochs: 200, batch size: 64, hidden states: {[}100, 50{]} \\
DIF & Anomaly Detection & epochs: 200, batch size: 128, hidden states: 256 \\
FeaWAD & Anomaly Detection & epochs: 200, batch size: 64, learning rate: 0.001, hidden states: {[}100, 50{]} \\
FT-Transformer & Classification & epochs: 500, batch size: 1024, learning rate: automatic, transformer layers: 6 \\
XGB Classifier & Classification & trees: $10^4$ \\
LGBM Classifier & Classification & trees: $10^4$ \\
Random Forest & Classification & max depth: 7 \\
LambdaMART & Ranking & objective: lambdarank, trees: $10^4$ \\
Rankformer & Ranking & epochs: 600, learning rate: 0.0001, feedforward: 128, attention heads: 8, transformer layers: 6 \\
PA-RiskRanker & Ranking & epochs: 200, learning rate: 0.0001, feedforward: 128, attention heads: 2, transformer layers: 6 \\ \bottomrule
\end{tabular}%
}
\caption{Overview of hyperparameters for benchmark models. Any parameters not listed are retained at their default values.}
\label{tab:hyper-parameters}
\vspace{-0.7cm}
\end{table}

\section{Reproducible Experiment on Extra Dataset}
\label{appendix:reproducible-exp}

\subsection{Credit Card Fraud Detection}

In this appendix section, we introduce our reproducible experiment on the widely-used credit card fraud detection dataset\footnote{\url{https://www.kaggle.com/datasets/mlg-ulb/creditcardfraud}} from Kaggle due to its relevance to our task of detecting risky traders.

\paragraph{Data Preprocessing}
Unlike trading data, this fraud detection dataset lacks information on ``future profit'', which we used to construct the $\mathbf{G}_{P\&L}$ matrix. 
Consequently, we treat the \texttt{Amount} column as a proxy for financial loss and use it to generate ranking labels, replacing the original fraudulent label.
To align the ``1\% risky trader'' proportion in our dataset, we simply sort the credit card data by its transactional amount in descending order and mark the top 1\% as positive.
After obtaining the labels, we then follow the ranking group allocation procedure specified in Section~\ref{sec:data-preprocess} to generate ranking groups that allow ranking benchmarks to be trained on.
The train-validation-test split is 70\%-10\%-20\% and we ensure the minority-to-majority class ratio of 1\% to 99\%.

\begin{table}[ht]
\centering
\resizebox{\columnwidth}{!}{%
\begin{tabular}{llrrrrrr}
\toprule
Model & Type & F1$\uparrow$ & Financial Loss$\downarrow$ & AUC$\uparrow$ & Precision$\uparrow$ & Sensitivity$\uparrow$ & Specificity$\uparrow$ \\ \midrule
\multicolumn{8}{c}{With Prior} \\ \midrule
DeepSAD & Anomaly Detection & 0.9767 & 52,499.95 & 0.9998 & 0.9539 & 0.9539 & 0.9995 \\
SLAD & Anomaly Detection & 0.5960 & 745,006.85 & 0.9687 & 0.2002 & 0.2002 & 0.9919 \\
DIF & Anomaly Detection & 0.5988 & 725,901.98 & 0.9652 & 0.2055 & 0.2055 & 0.9920 \\
FeaWAD & Anomaly Detection & 0.7072 & 486,894.08 & 0.9921 & 0.4203 & 0.4203 & 0.9941 \\ \midrule
FT-Transformer & Classification & 0.9593 & 88,705.93 & 0.9971 & 0.9195 & 0.9195 & 0.9992 \\
XGB Classifier ($10^4$) & Classification & 0.9755 & 58,149.26 & 0.9999 & 0.9516 & 0.9516 & 0.9995 \\
LGBM Classifier ($10^4$) & Classification & 0.9729 & 63,659.04 & 0.9999 & 0.9463 & 0.9463 & 0.9995 \\
Random Forest & Classification & 0.9682 & 74,770.60 & 0.9998 & 0.9370 & 0.9370 & 0.9994 \\ \midrule
$\lambda$MART ($10^4$) & Ranking & 0.9694 & 72,817.16 & 0.9999 & 0.9393 & 0.9393 & 0.9994 \\
PA-$\lambda$MART ($10^4$) & Ranking & 0.9667 & 83,895.95 & 0.999 & 0.9340 & 0.9340 & 0.9993 \\
SOUR & Ranking & 0.4949 & 1,070,615.77 & 0.0763 & 0.0000 & 0.0000 & 0.9899 \\
Rankformer & Ranking & 0.9820 & 43,821.78 & 0.9999 & 0.9644 & 0.9644 & 0.9996 \\
LambdaLoss & Ranking & 0.9540 & 104,886.57 & 0.9997 & 0.9089 & 0.9089 & 0.9991 \\
PA-RiskRanker & Ranking & \textbf{0.9870} & \textbf{31,368.39} & 0.9998 & 0.9743 & 0.9743 & 0.9998 \\ \midrule
\multicolumn{8}{c}{Without Prior} \\ \midrule
DeepSAD & Anomaly Detection & 0.5664 & 884,417.04 & 0.9999 & 0.0994 & 1.0000 & 0.9085 \\
SLAD & Anomaly Detection & 0.5787 & 786,233.91 & 0.9687 & 0.1134 & 0.8832 & 0.9244 \\
DIF & Anomaly Detection & 0.6175 & 498,281.50 & 0.9652 & 0.1620 & 0.5709 & 0.9702 \\
FeaWAD & Anomaly Detection & 0.6965 & 657,491.46 & 0.9921 & 0.3058 & 0.8307 & 0.9629 \\ \midrule
FT-Transformer & Classification & 0.9593 & 89,030.83 & 0.9971 & 0.9007 & 0.9393 & 0.9990 \\
XGB Classifier ($10^4$) & Classification & 0.9718 & 67,644.60 & 0.9999 & 0.9758 & 0.9148 & 0.9998 \\
LGBM Classifier ($10^4$) & Classification & 0.9722 & 66,277.72 & 0.9999 & 0.9682 & 0.9229 & 0.9997 \\
Random Forest & Classification & 0.9649 & 85,672.36 & 0.9998 & 0.9683 & 0.8961 & 0.9997 \\ \midrule
$\lambda$MART ($10^4$) & Ranking & 0.9699 & 72,533.09 & 0.9999 & 0.9565 & 0.9247 & 0.9996 \\
PA-$\lambda$MART ($10^4$) & Ranking & 0.6741 & 1,190,792.17 & 0.9823 & 0.2244 & 0.9901 & 0.9654 \\
SOUR & Ranking & 0.4975 & 1,053,450.59 & 0.0763 & 0.0000 & 0.0000 & 1.0000 \\
Rankformer & Ranking & 0.9811 & 45,342.84 & 0.9999 & 0.9580 & 0.9673 & 0.9996 \\
LambdaLoss & Ranking & 0.9533 & 104,037.59 & 0.9997 & 0.8883 & 0.9288 & 0.9988 \\
PA-RiskRanker & Ranking & \textbf{0.9856} & \textbf{35,332.36} & 0.9998 & 0.9708 & 0.9726 & 0.9997 \\ \bottomrule
\end{tabular}%
}
\caption{Detailed performance of benchmark models on the credit card fraud detection dataset. The values represent the mean performance across 3-fold cross validation.}
\label{tab:credit-card-results}
\vspace{-0.8cm}
\end{table}

\paragraph{Experimental Results}
In line with our main experiments, we evaluate all benchmark models using 3-fold cross-validation. 
The average performance results are summarised in Table~\ref{tab:credit-card-results}. 
Financial loss is computed under a simplified assumption: the model is penalised by the transaction amount for each misclassified case. 
While this assumption does not reflect real-world banking practices, where false positives typically do not incur direct financial penalties, it provides a more balanced evaluation. 
Penalising only false negatives would bias models toward over-sensitivity to achieve zero loss, which is undesirable. 
Thus, our setup offers a reasonable trade-off for assessing performance in financial terms.

It is important to note that the \texttt{Amount} column, which we use to derive ranking labels, is originally an input feature in the dataset.
As described in the dataset documentation, the remaining features are the result of dimensionality reduction techniques such as Principal Component Analysis (PCA), which may contain strong correlations with \texttt{Amount}. 
This likely contributes to the high AUC scores observed across most of the benchmark models.

Nevertheless, PA-RiskRanker stands out by achieving the highest F1 score and the lowest financial loss across both ``with prior'' and ``without prior'' setups. The reduction in financial loss is particularly significant, with margins ranging from approximately £10K to £1000K compared to other models, further validating the effectiveness of our method.

\subsection{Job Profit Prediction}

In this appendix section, we further assess the generalisability of our approach on a job profitability prediction dataset\footnote{\url{https://www.kaggle.com/datasets/ulrikthygepedersen/job-profitability}}. The dataset records individual jobs with features relevant to job execution, costs, and revenue generation. 
To prevent data leakage, we remove columns containing future information, including \texttt{Job\_Number}, \texttt{Jobs\_Subtotal}, \texttt{Labor}, \texttt{Jobs\_Total}, \texttt{Lead\_Generated\_From\_Source}, \texttt{Pricebook\_Price}, and \texttt{Jobs\_Gross\_Margin}.

Following the same protocol as the credit card fraud detection experiment, we assign positive labels to the top 1\% of jobs ranked by profitability and negatives to the rest. Ranking group allocation, train-validation-test splits (70\%-10\%-20\%), and class imbalance settings are kept consistent with Section~\ref{appendix:reproducible-exp}.
Financial loss is also calculated identically: for each misclassified job, we penalise the model by the missed profit. All models are evaluated using 3-fold cross-validation, with results reported in Table~\ref{tab:my-table}.

\begin{table}[htp]
\centering
\resizebox{\textwidth}{!}{%
\begin{tabular}{llllllll}
\toprule
Model & Type & \multicolumn{1}{r}{F1 $\uparrow$} & \multicolumn{1}{r}{Financial Loss $\downarrow$} & \multicolumn{1}{r}{AUC $\uparrow$} & \multicolumn{1}{r}{Precision $\uparrow$} & \multicolumn{1}{r}{Sensitivity $\uparrow$} & \multicolumn{1}{r}{Specificity $\uparrow$} \\ \midrule
\multicolumn{8}{c}{With Prior} \\ \midrule
DeepSAD & Anomaly Detection & 0.4953 & 171,425.37 & 0.4842 & 0.0000 & 0.0000 & 0.9899 \\
SLAD & Anomaly Detection & 0.6435 & 112,685.96 & 0.9777 & 0.2738 & 0.3205 & 0.9927 \\
DIF & Anomaly Detection & 0.5971 & 162,701.42 & 0.9717 & 0.1905 & 0.2167 & 0.9918 \\
FeaWAD & Anomaly Detection & 0.7724 & 77,227.59 & 0.9810 & 0.5119 & 0.5986 & 0.9951 \\ \midrule
FT-Transformer & Classification & 0.8285 & 72,724.86 & 0.9949 & 0.6190 & 0.7152 & 0.9962 \\
XGB Classifier ($10^4$) & Classification & 0.9247 & 28,348.70 & 0.9994 & 0.7976 & 0.9210 & 0.9980 \\
LGBM Classifier ($10^4$) & Classification & 0.9244 & 27,838.33 & 0.9995 & 0.7976 & 0.9195 & 0.9980 \\
Random Forest & Classification & 0.8980 & 37,132.58 & 0.9983 & 0.7500 & 0.8610 & 0.9975 \\ \midrule
$\lambda$MART ($10^4$) & Ranking & 0.9247 & 28,665.63 & 0.9994 & 0.7976 & 0.9210 & 0.9980 \\
PA-$\lambda$MART ($10^4$) & Ranking & 0.8799 & 51,606.02 & 0.9965 & 0.7143 & 0.8257 & 0.9971 \\
SOUR & Ranking & 0.4953 & 171,425.37 & 0.4842 & 0.0000 & 0.0000 & 0.9899 \\
Rankformer & Ranking & 0.8539 & 59,177.19 & 0.9954 & 0.6666 & 0.7690 & 0.9966 \\
LambdaLoss & Ranking & 0.8342 & 69,817.75 & 0.9950 & 0.6309 & 0.7252 & 0.9963 \\
PA-RiskRanker & Ranking & \textbf{0.9491} & \textbf{19,363.32} & 0.9996 & 0.8571 & 0.9506 & 0.9986 \\ \midrule
\multicolumn{8}{c}{Without Prior} \\ \midrule
DeepSAD & Anomaly Detection & 0.5271 & 348,874.44 & 0.8776 & 0.0582 & 0.7214 & 0.9011 \\
SLAD & Anomaly Detection & 0.5523 & 384,875.61 & 0.9777 & 0.0834 & 0.9614 & 0.9075 \\
DIF & Anomaly Detection & 0.6021 & 217,733.40 & 0.9717 & 0.2519 & 0.6543 & 0.9658 \\
FeaWAD & Anomaly Detection & 0.5538 & 547,800.86 & 0.9810 & 0.0840 & 0.8981 & 0.9132 \\ \midrule
FT-Transformer & Classification & 0.8448 & 64,114.73 & 0.9949 & 0.7693 & 0.6352 & 0.9983 \\
XGB Classifier ($10^4$) & Classification & 0.9263 & 27,530.39 & 0.9994 & 0.8215 & 0.8895 & 0.9983 \\
LGBM Classifier ($10^4$) & Classification & \textbf{0.9369} & \textbf{21,695.83} & 0.9995 & 0.9078 & 0.8476 & 0.9993 \\
Random Forest & Classification & 0.9165 & 30,440.85 & 0.9983 & 0.8452 & 0.8329 & 0.9987 \\ \midrule
$\lambda$MART ($10^4$) & Ranking & 0.9171 & 31,888.59 & 0.9994 & 0.8141 & 0.8648 & 0.9982 \\
PA-$\lambda$MART ($10^4$) & Ranking & 0.8933 & 39,272.97 & 0.9986 & 1.0000 & 0.6500 & 1.0000 \\
SOUR & Ranking & 0.4975 & 161,824.86 & 0.4842 & 0.0000 & 0.0000 & 0.9989 \\
Rankformer & Ranking & 0.8190 & 83,849.98 & 0.9954 & 0.6084 & 0.7662 & 0.9944 \\
LambdaLoss & Ranking & 0.8115 & 81,905.58 & 0.9950 & 0.6439 & 0.6690 & 0.9959 \\
PA-RiskRanker & Ranking & 0.9359 & 28,266.70 & 0.9995 & 0.8529 & 0.9057 & 0.9986 \\ \bottomrule
\end{tabular}%
}
\caption{Detailed performance of benchmark models on the job profit prediction dataset. The values represent the mean performance across 3-fold cross validation.}
\label{tab:my-table}
\vspace{-0.5cm}
\end{table}

Consistent with previous experiments, PA-RiskRanker achieves the highest F1 and the lowest financial loss in the ``with prior'' setting, which aligns with our ranking-based problem formulation. 
In the ``without prior'' setting, prediction models such as LGBM Classifier and XGB Classifier outperform most ranking methods, with PA-RiskRanker remaining highly competitive, ranking as the second-best overall. 
This suggests that, while job profitability prediction differs from our core risk-ranking task and may favour traditional classifiers under certain evaluation setup, our approach still demonstrates strong transferability, delivering best performance in ranking settings and remaining robust and comparable in more conventional classification setups.

\section{Feature Importance Analysis}
\label{appendix:feature-importance}

Figures~\ref{fig:feat_imp_xgb}, \ref{fig:feat_imp_lgbm}, and \ref{fig:feat_imp_rf} present the feature importance scores obtained from the second-step classifiers, XGBoost, LightGBM, and Random Forest—each trained with $10^3$ trees. 
Features are ordered in ascending importance. 
Higher scores reflect stronger influence on the classifier’s output.

These results serve as an illustrative example of how the two-step approach contributes to interpretability by quantifying the influence of both raw behavioural features and the intermediate predictive scores generated in the first step (denoted as fst\_step\_scores). 
Notably, the relative importance of fst\_step\_scores differs across models: it ranks low in XGBoost, but is assigned moderate importance in LightGBM and Random Forest. 
This variation underscores the potential of using predicted scores from black-box models as input features for more interpretable classifiers, thereby providing a pathway to transparency through surrogate analysis.

\begin{figure}[htbp]
    \centering
    \includegraphics[width=\textwidth]{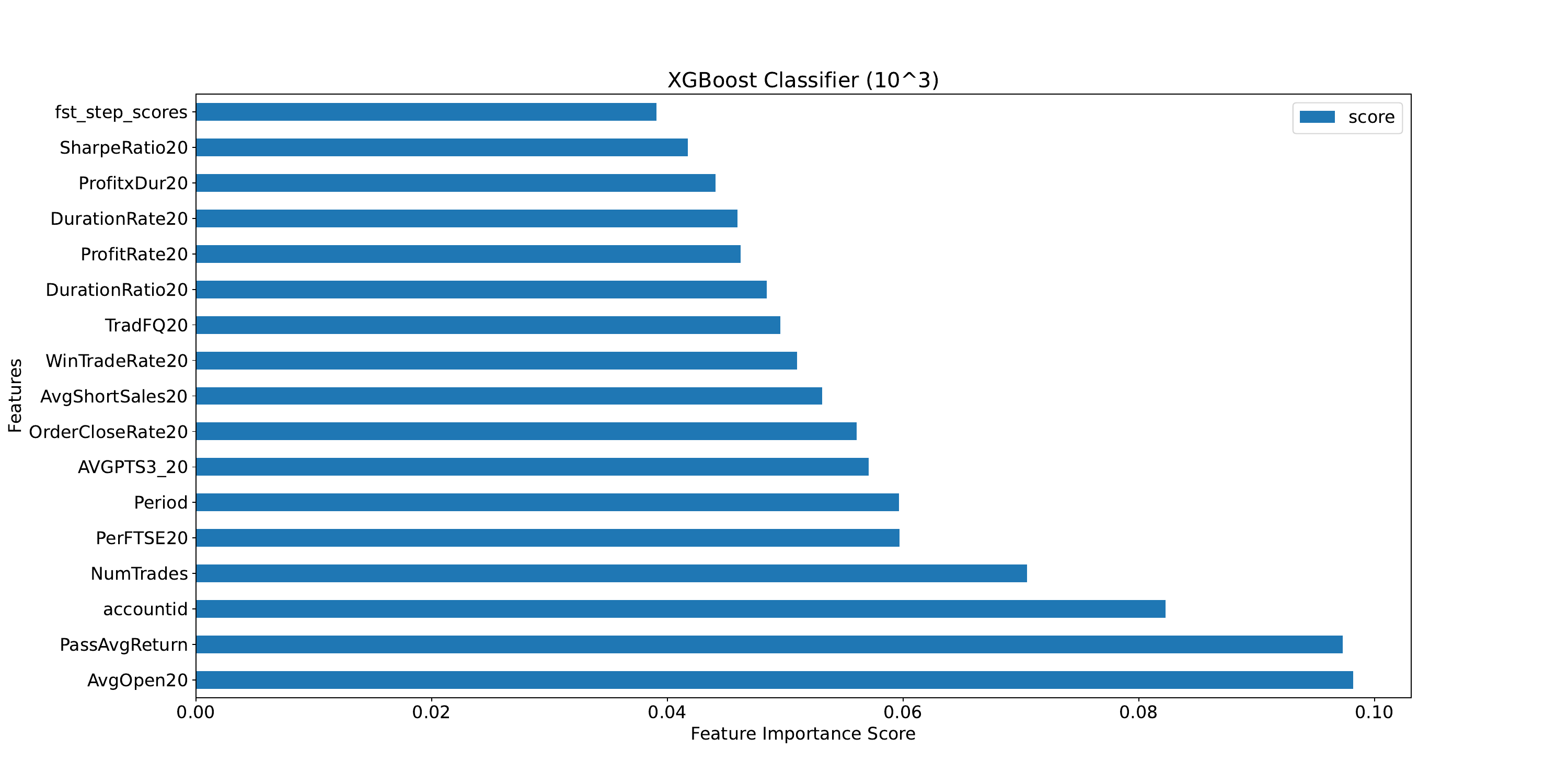}
    \caption{Feature importance of the second step XGBoost Classifier with $10^3$ trees. ``fst\_step\_scores'' stands for the predictive scores generated from the first step PA-RiskRanker.}
    \label{fig:feat_imp_xgb}
\end{figure}

\begin{figure}[htbp]
    \centering
    \includegraphics[width=\textwidth]{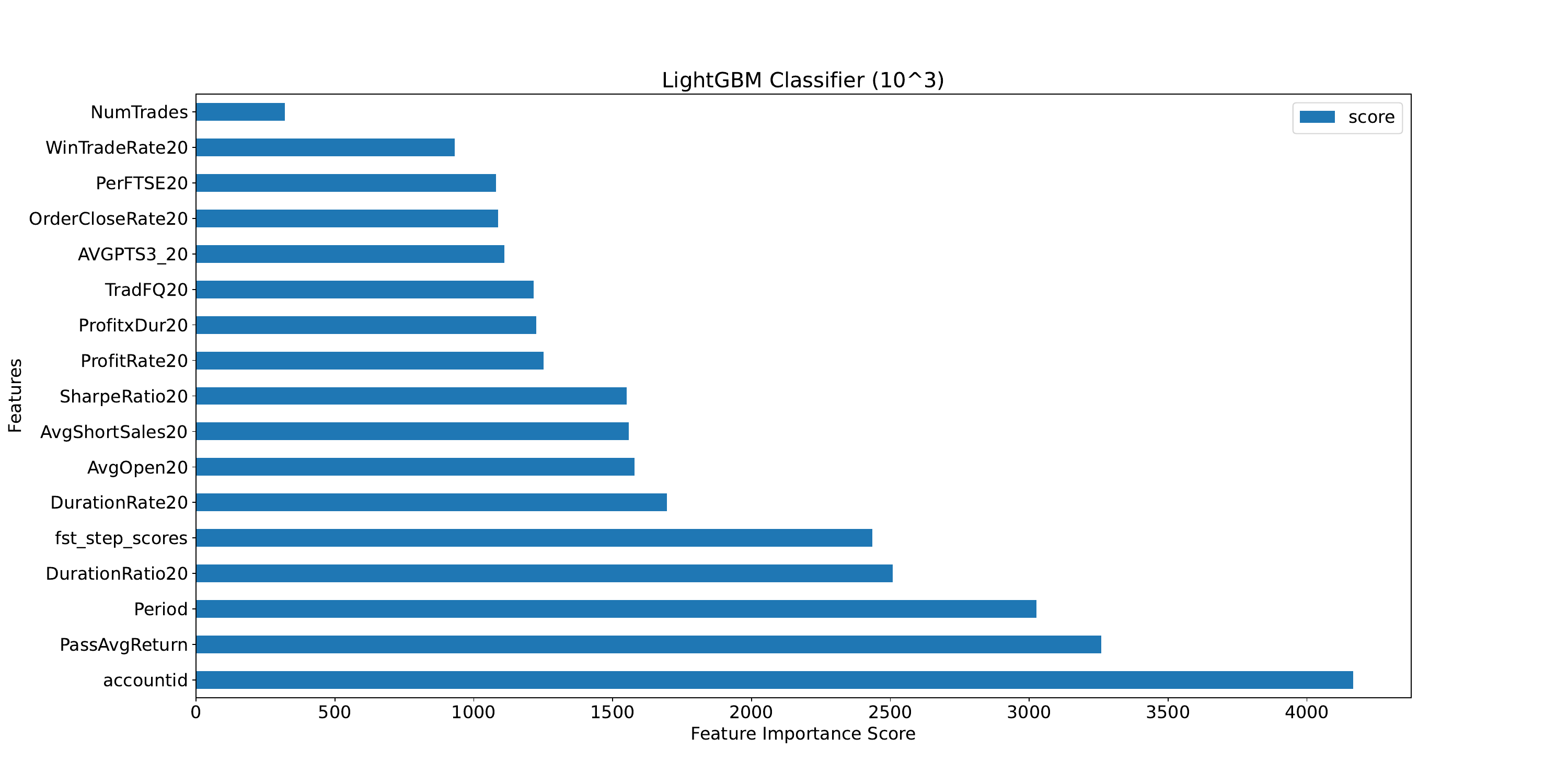}
    \caption{Feature importance of the second step LightGBM Classifier with $10^3$ trees. ``fst\_step\_scores'' stands for the predictive scores generated from the first step PA-RiskRanker.}
    \label{fig:feat_imp_lgbm}
\end{figure}

\begin{figure}[htbp]
    \centering
    \includegraphics[width=\textwidth]{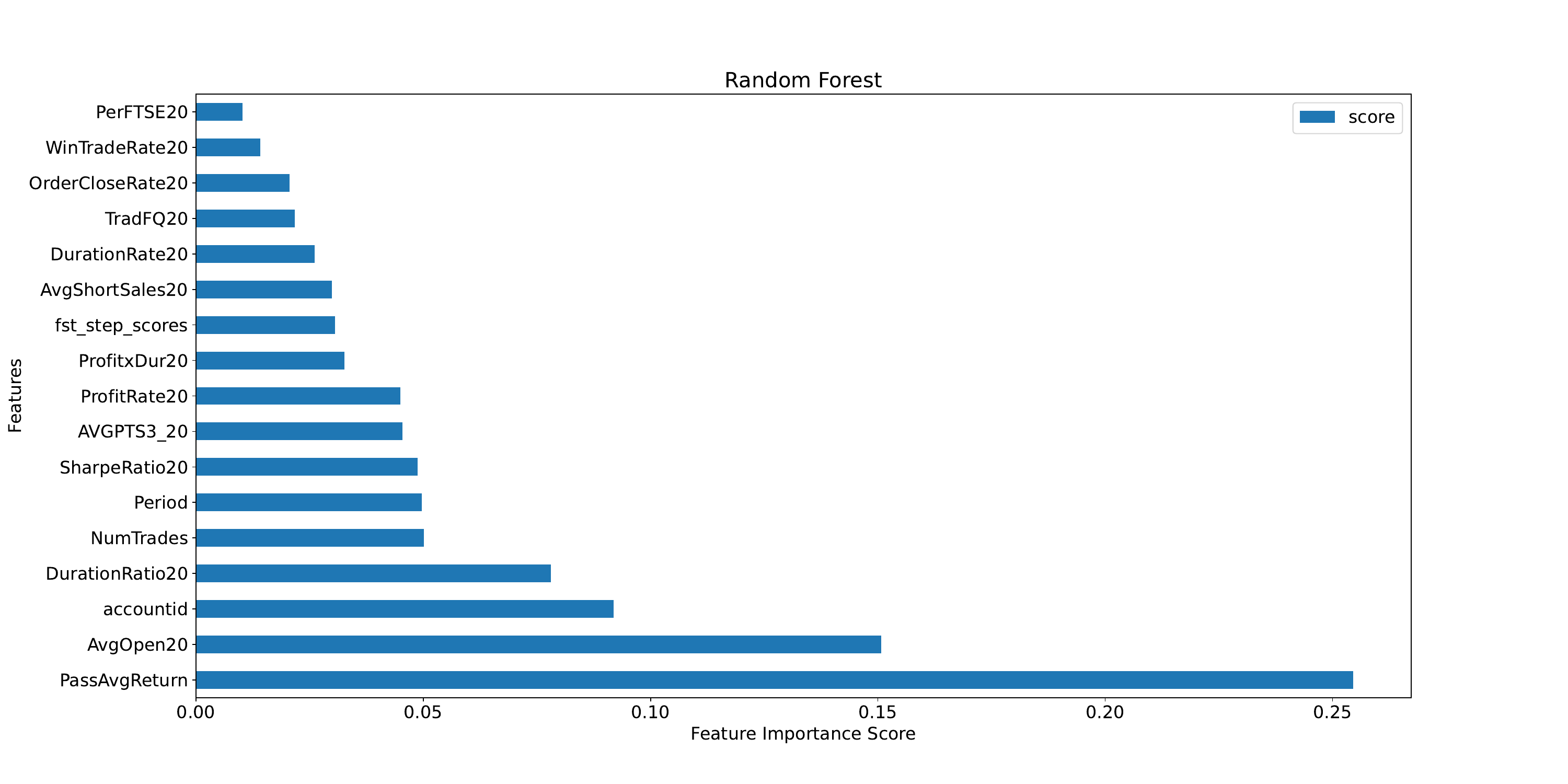}
    \caption{Feature importance of the second step Random Forest Classifier with $10^3$ trees. ``fst\_step\_scores'' stands for the predictive scores generated from the first step PA-RiskRanker.}
    \label{fig:feat_imp_rf}
\end{figure}

\end{document}